\documentclass[12pt]{article}
\usepackage{amsmath}
\usepackage{graphicx,psfrag,epsf,ulem,bm}
\usepackage{enumerate}
\usepackage{hyperref}
\usepackage{url} % not crucial - just used below for the URL 
\usepackage{amssymb}   % for Greek symbols
\usepackage{xcolor}    % for color
\usepackage{amsthm}    % For theorems and definitions  
\usepackage{csquotes}  % for inverted quote
\usepackage{subcaption}  % for panel graph
\usepackage{floatrow}
\usepackage{setspace}
\usepackage{adjustbox}
\usepackage[top=1in,bottom=1in,left= 1in, right= 1in]{geometry} % for margins
%% DNA - For appendix renumbering
\usepackage{chngcntr}
\usepackage{apptools}

\floatsetup[table]{capposition=top}    % for captions of the top of the tables

%\pdfminorversion=4
% NOTE: To produce blinded version, replace "1" with "0" below.
\newcommand{\blind}{1}

% DON'T change margins - should be 1 inch all around.
%\addtolength{\oddsidemargin}{-.5in}%
%\addtolength{\evensidemargin}{-.5in}%
%\addtolength{\textwidth}{1in}%
%\addtolength{\textheight}{-.3in}%
%\addtolength{\topmargin}{-.8in}%

\makeatletter
\newcommand*{\rom}[1]{\expandafter\@slowromancap\romannumeral #1@}  % for roman number
\makeatother
 % for R package name
  % for subfigure roman numbering

\def \ben{\begin{eqnarray*}}
	\def \een{\end{eqnarray*}}
\def \bea{\begin{eqnarray}}
\def \eea{\end{eqnarray}}

\theoremstyle{definition}

\def\pf{{\bf Proof. }}
\newtheorem{theorem}{Theorem}[section]
\newtheorem{lemma}{Lemma}[section]

%%% Deepak's commands and packages
\usepackage{soul, multirow}
\newcommand{\DNA}[1]{{\color{black} #1}}

\begin{document}

	\def\spacingset#1{\renewcommand{\baselinestretch}%
		{#1}\small\normalsize} \spacingset{1}

	%%%%%%%%%%%%%%%%%%%%%%%%%%%%%%%%%%%%%%%%%%%%%%%%%%%%%%%%%%%%%%%%%%%%%%%%%%%%%%
	
	\if1\blind
	{
		\title{\bf Two-Sample High Dimensional Mean Test Based On Prepivots}
		\author{Santu Ghosh\thanks{Corresponding Author} \\
Augusta University \\
sghosh@augusta.edu\\[10pt]
			Deepak Nag Ayyala \\
			Augusta University \\
			dayyala@augusta.edu\\[10pt]
			Rafael Hellebuyck\\ 
			South Carolina Department of Public Safety
}
		\clearpage\maketitle
\thispagestyle{empty}
	} \fi
	
	\if0\blind
	{
		\bigskip
		\bigskip
		\bigskip
		\begin{center}
			{\LARGE\bf High Dimensional Mean Test: Random Projection and Prepivotings}
		\end{center}
		\medskip
	} \fi
	
	%\bigskip

%\clearpage
%\thispagestyle{empty}

%\doublespacing 
%\singlespacing
\begin{abstract}
%\st{Within the medical field, the demand to store and analyze small sample, large variable data has become ever-abundant.} 
\DNA{Testing equality of mean vectors is a very commonly used criterion when comparing two multivariate random variables}.  \DNA{Traditional} tests such as Hotelling's $T^{2}$ become either unusable or output small power when the number of variables is greater than the combined sample size. \DNA{In this paper, w}e propose a test using both prepivoting and Edgeworth expansion  for testing the equality of two population mean vectors in the ``large $p$\DNA{,} small n" setting. The asymptotic null distribution of the test statistic is derived and it is shown that the power of suggested test converges to one under certain alternatives when both $n$ and $p$ increase to infinity  
against sparse alternatives. Finite sample performance of the proposed test statistic is compared with other recently developed tests designed to also handle the ``large $p$\DNA{,} small $n$" situation through simulations. The proposed test achieves competitive rates for both type I error \DNA{rate} and power. The usefulness of our test is illustrated by applications to two microarray gene expression data sets.\newline

\noindent
\textit{keywords}:  High dimensional mean vectors test; Prepivoting;  Edgeworth expansion; Limiting null distribution; Power; Dense alternatives; Sparse alternatives

\maketitle
\end{abstract}
\newpage
\spacingset{1.75}
\section{Introduction}
Due to the advancement of technologies, researchers in various fields of sciences often encounter the analysis of massive data sets to extract useful information for new scientific discoveries. Many of these data sets contain a large number of features\DNA{, often exceeding the}  number of observations. Such data sets  are common in the fields of microarray gene expression data analysis \cite{1,2}; medical image analysis \cite{3}, signal processing, astrometry and finance \cite{4}. \DNA{Analysis of s}uch high dimension, low sample size data \DNA{sets} present a substantial challenge, known as the ``large $p$\DNA{,} small $n$" problem. \DNA{One of the most prominent problems that} researchers are interested in \DNA{is} making inferences on the mean structure of \DNA{the} population. However, many well known classical multivariate methods cannot be used to make inferences on the mean structures for large $p$ small $n$  data. For example, because of the singularity of the estimated pooled covariance matrix, the classical Hotelling’s $T^2$ statistic \cite{5} breaks down for the two sample test when the dimension of the data exceeds the combined sample size. 

Over the last few years, researchers have turned their attention to developing statistical methods that can handle the large $p$ small $n$ problem. A series of important works 
%\cite{3, 6,7,8,9,10,11,12,13,14,15} 
%{(Dempster \cite{6}; Bai and Saranadasa \cite{7}; Srivastava and DU \cite{8}; Chen and Qin \cite{9}; Srivastava et al. \cite{10}; Park and Ayyala \cite{11}, Cai et al. \cite{3}; Gregory et al.\cite{12}; Srivastava et al. \cite{13}; Gongjun et al. \cite{14}; Zoh et al. \cite{15})}%
have been done on the two-sample location problem in the large $p$ small $n$  setup, and several parametric and non-parametric tests have been developed.
Most of these prior works \DNA{are primarily motivated by} creating methodologies that avoid the issues that Hotelling's $T^{2}$\DNA{-statistic} faces in the large $p$ small $n$ scenario, particularly the \DNA{singularity} of the pooled sample covariance matrix. An early important work in this direction was developed by Bai and Saranadasa \cite{6} where they considered the squared Euclidean norm between sample means of the two populations as an alternative to the Mahalanobis distance used in Hotelling’s $T^{2}$-statistic. However, one of the criticisms (see  \cite{7}) of this method is the assumption of equal variance structures for \DNA{the two populations}, which is hard to verif\DNA{y} for high-dimensional data. \DNA{Addressing this shortcoming}, Chen and Qin \cite{7} suggest\DNA{ed} an alternative method that does not assume the equality of variance-covariance matrices and removes the cross-product terms from the square\DNA{d} Euclidean norm \DNA{of difference of sample means between the two populations}. Another method proposed by Srivastava \cite{8} replaces the inverse of the sample covariance matrix  in Hotelling’s $T^{2}$-statistic with the inverse of the diagonal of the covariance matrix. \DNA{Modifications to this work \cite{9,10} have relaxed some of the assumptions, resulting in relatively similar test statistics with improved performance.} Recently, Gregory et al. \cite{11} suggest\DNA{ed} a test procedure, known as generalized component test (GCT), that bypasses the full estimation of the covariance matrix by assuming that  there is a logical ordering among \DNA{the} $p$ components in such a way that the dependence between any two components is related to their displacement. In a development of another direction, Cai et al. \cite{4} suggest a test procedure based on linear transformation of the data by a precision matrix, where the test statistic is the sup-norm of marginal $t$-statistics of transformed data. Tests developed in \cite{6,7,8,9,11} are designed for testing dense but possibly weak signals, i.e., there are a large number  of small non-zero means differences. Test suggested in \cite{4} is designed for testing  sparse signals, i.e., when there are a small number  of large non-zero means differences. The implementation of test \cite{4}   requires  sparsity structure assumptions on the covariance or precision matrix, which may not be satisfied in real applications. Furthermore, it also requires an estimate of the $p\times p$ precision matrix, which is time-consuming for large $p$, see \cite{12}.

Driven by the above two concerns of the test \cite{4}, we reconsider the problem of testing two mean vectors under sparse alternatives. We propose a new test based on the prepivoting approach. The concept of  prepivoting is introduced by  Beran \cite{12}. Prepivoting is a mechanism  that transforms a root, a function of samples and parameters, by its estimated distribution function to a random variable, known as prepivoted root. The characteristics of the distribution of a prepivoted root are very similar to that of a  uniform distribution on (0,1). Thus, the distribution of the prepivoted root is less dependent on population parameters than  the distribution of the root. Consequently, approximate inference based on the prepivoted root is more accurate than approximate inference based on a original root. Pre-pivoting has not been considered previously in high-dimensional testing problems. A snapshot of the development of our proposed test method is as follows: we construct prepivoting  marginally for all the $p$ variables individually using an asymptotic refinement, particularly the Egdeworth exapnsion. Then the marginal prepivoted roots are combined using the max-norm. 

The rest of paper is organized as follow. In Section \ref{sec:methodology}, we formulate the construction of the  suggested test method for the hypothesis testing problem (\ref{eq1})  based on the prepivoting technique technique and study the limiting distribution null distribution of the test statistic. In this section, we also analyze the power of the suggested test. Simulation studies are presented in Section \ref{sec:simulation} and Section \ref{sec:realdata} presents application of the proposed test to two microarray gene expression data sets. We finally conclude the article with some brief remarks in Section 5. Proofs and other theoretical details are relegated to the Appendix.

\section {Methodology}
\label{sec:methodology}

Let $\{\bm{X}_{1}, \bm{X}_{2},\ldots,\bm{X}_{n}\}$ and $\{\bm{Y}_{1}, \bm{Y}_{2},\ldots,\bm{Y}_{m}\}$ \DNA{be independently and identically distributed ({\it i.i.d.})} samples drawn from two $p$ variate distributions $F_{x}$ and $F_{y}$, respectively. Let $\bm{\mu}_{x}$ and $\Sigma_x$  denote the mean vector and covariance matrix of $F_1$ respectively; $\bm{\mu}_{y}$ and $\Sigma_y$  denote the mean vector and covariance matrix of $F_2$ respectively. The primary focus is on testing equality of mean vectors of the two populations,
\begin{equation} \label{eq1}\text{H}_0: \bm{\mu}_{x}=\bm{\mu}_{y}\;\;\text{vs.}\;\; \text{H}_1: \bm{\mu}_{x}\neq \bm{\mu}_{y}.\end{equation} 
The \DNA{proposed} method for the two sample mean vector test (\ref{eq1}) adopts the prepivoting technique introduced by Beran \cite{12}. This technique is useful to draw more accurate statistical inferences in the absence of a pivotal quantity. Prepivoting technique transforms the original root, can be functionally dependent on data and parameters,  into another value through its estimated cumulative distribution function (cdf). The new value is  known as prepivoted root or prepivot.  %That is, the prepivoting  gives a new root whose dependence on the  population distribution reduced. For the construction of either a statistical test or a confidence set for an unknown parameter, we are often required to have a quantile of a random variable that can be  functionally dependent on observable object, i.e. data and on unknown object of interest, parameters  known as root (Beran \cite{19}). If the root is not a pivot, we usually use either the estimated or asymptotic distribution of the root to conduct statistical inferences for the unknown parameters. For non-pivot roots, statistical inferences are in error because the true distributions of the roots differs from its estimated or asymptotic distributions. In other words, the coverage probability of a confidence set or the size of test are different from the true nominal level or size. This error can be reduced if inferences are performed using a technique called prepivoting.  Often, the prepivoted root is asymptotically uniform. Prepivoting quantities whose limiting distribution does not depend on the parameter is the best choice to attain the lowest asymptotic level error of tests and confidence sets.

\subsection{Overview On Prepivoting}
For the sake of simplicity, \DNA{we firstly introduce the concept of prepivoting}. Consider a statistical model $F_{\theta}$, where $\theta\in\Theta\subset \mathbb{R}$ is unknown and \DNA{we are interested in }the hypothesis test, $H_{0}:\theta=\theta_{0}$ vs.  $H_1:\theta\neq\theta_{0}$.  Let $\bm{U}$ be a random sample from $F\DNA{_{\theta}}$.  In frequentist inference problem, one often compares the observed \DNA{value} of a root, $R(\bm{U},\theta)$, under $H_{0}$ to the quantile of that root's null distribution. We reject $H_{0}$ at a significance level $\alpha\in(0,1)$ whenever $R(\bm{U},\theta_{0})$  is greater than $J^{-1}(1-\alpha)$,  $J(v)=P\{R(\bm{U},\theta_{0})\leq v\}$ is cdf of the root $R(\bm{U},\theta)$ under $H_{0}$. %, and $G_{\theta}(\cdot)$ is distribution of the data $\bm{U}=\{U_1,\ldots, U_n\}.$ 
Unless $R(\bm{U}, \theta)$ is a pivotal quantity, $J^{-1}(1-\alpha)$ depends on $F$. \DNA{If $R(\bm{U}, \theta)$ is not a pivotal quantity, then its null distribution can be derived only through $F_{\theta}$}. Thus a hypothesis test based on a non pivotal quantity is not exact. \DNA{Furthermore}, the actual level of a test based on a non pivotal root differs from the nominal level. 

In such cases, the prepivoting is \DNA{a} useful \DNA{technique} to reduce errors associated with inference. The main idea behind Beran's prepivoting technique is to transform  the original root $R(U,\theta)$ to a new root $\hat{J}(R(U,\theta))$ whose sampling distribution  is ``smoother" in the sense that its dependence on the  population distribution is reduced, where $\hat{J}(\cdot)$ is  an estimate of $J(\cdot)$. 

%This technique uses the estimated CDF of the original root to construct a prepvioted root that is used to conduct statistical inferences. 

If the distribution $F$ is unknown, then prepivots are constructed based on the empirical distribution function $\hat{F}$. Let $\hat{J}(x)=P\{T(\bm{U^{\star}},\hat{\theta})<x|\hat{F}\}$ denote the bootstrap estimate of $J(x)$, and $\bm{U^{\star}}$ \DNA{be} a random sample from $\hat{F}$ . The prepivot is given by $\hat{J}(R(\bm{U}, \theta))$, and \DNA{it has been established} \cite{12} \DNA{that} $\hat{J}(R(\bm{U}, \theta))$ is closer \DNA{to} being pivotal than the original root or more precisely an approximately U(0,1) random variable.  Let $\hat{J}_{1}$ denote the bootstrap estimate of $J_{1}(x)$, the cdf of $\hat{J}( R(\bm{U}, \theta) )$. The statistical test based on prepivoting rejects $H_{0}:\theta=\theta_{0}$ at level $\alpha$  if $R(\bm{U},\theta_{0})>\hat{J}^{-1}(\hat{J}_{1}(1-\alpha))$.  Prepivoting usually is accomplished by a Monte Carlo simulation (\cite{12,13}) known as nested double-bootstrap
algorithm. This algorithm  consists of an outer and an inner level of non\DNA{-}parametric bootstrap, which give the estimated cdfs $\hat{J}(x)$ and $\hat{J}_{1}(x)$, respectively.  However, prepivoting  has been criticized for its computational cost due to the nested double-bootstrap
algorithm (\cite{14}; \cite{15}).

%This conventional  prepivoting approach is also known as  the iterated bootstrap method (Hall \cite{21}). We reject $H_{0}:\theta=\theta_{0}$ if $\hat{J}_{1}(\hat{J}(T(\bm{U},\theta_{0})))>1-\alpha$, $\hat{J}_{1}(\cdot)$ is the bootstrap estimator of the cdf of prepivot $\hat{J}(T(\bm{U},\theta))$. $\hat{J}(T(\bm{U},\theta))$ being a ``correction of the former"  $\hat{J}(T(\bm{U},\theta))$ should be ``more pivot" than $T(\bm{u},\theta)$. Often, the prepivot is asymptotically uniform and prepivots whose limiting distribution does not depend on  parameters are the best choice to attain the lowest asymptotic level error of tests and confidence sets. 

%For instance, in the above discussion the prepivot is $J(T(\bm{U},\theta))$ itself, whose distribution is U(0,1) and is thus independent of $\theta$. However, since $\theta$ is unknown so it is natural to consider $J(T(\bm{U},\hat{\theta}))$ as prepivot that may not have an exact U(0,1) distribution.When the distribution of $J(T(\bm{U},\hat{\theta}))$ is unknown typically the bootstrap method is used construct the prepivot However, the construction of confidence sets, conducting tests based on  the conventional prepivoting approach ar.e computationally expensive.
\subsection{Proposed Test Method}
To facilitate the construction of \DNA{the proposed test}, consider the elements of the random vectors, $\mathcal{X}_i = \{{X}_{1i}, \ldots, {X}_{ni}\}$ and $\mathcal{Y}_i = \{{Y}_{1i}, \ldots, {Y}_{mi}\}$, $i = 1, \ldots, p$. As a procedure for the two-sample mean vector test given in \eqref{eq1} based on prepivoting,  consider the root
\begin{equation}
    R_{i}=R_{i}(\mathcal{X}_{i}, \mathcal{Y}_{i},\mu_{xi},\mu_{yi})=  \frac{|(\bar{X}_{i}-\bar{Y}_i- (\mu_{xi}-\mu_{yi})|}{\sqrt{\hat{\sigma}^2_{xi}/n +\hat{\sigma}^2_{yi}/m}},\;\; i=1,\ldots,p;
\end{equation}
where $\bar{X}_{i}$ -$\bar{Y}_{i}$ is the $i^{\rm th}$ component of $\bm{\bar{X}}-\bm{\bar{Y}}$, which is an unbiased estimator for the $i^{\rm th}$ element of $\bm{\mu}_{x}-\bm{\mu_{y}}$, {\it viz.}  $\mu_{xi}-\mu_{yi}$. The quantities $\hat{\sigma}^2_{xi}=n^{-1}\sum_{j=1}^{n}(X_{ij}-\bar{X}_{i})^2$ and $\hat{\sigma}^2_{yi}=m^{-1}\sum_{j=1}^{m}(Y_{ij}-\bar{Y}_{i})^2$ are plug\DNA{-}in estimators of $i^{\rm th}$ diagonal element of $\Sigma_x$ and $\Sigma_y$, respectively. Let $R^{0}_{i}$ denote the version of $R_{i}$ under $H_{0}$. Then $R^{0}_{i}=\frac{|\bar{X}_{i}-\bar{Y}_i|}{\sqrt{\hat{\sigma}^2_{xi}/n_{1}+\hat{\sigma}^2_{yi}/n_{2}}}$, $i=1\ldots, p$. 

Here, we are interested for detecting relatively sparse signals so statistics  of the maximum-type such are more useful than sum of squares-type statistics (see \cite{4}). Thus, at first glance, the test statistic  $\max_{1\leq i\leq p} R^{0}_{i}$  is a potential candidate for the testing problem \ref{eq1}. 
\DNA{Under normality, the roots follow $t$ distribution with $n_1 + n_2 - 2$ degrees of freedom under $H_0$, provided $\sigma_{xi} = \sigma_{yi}$. For unequal variances, we encounter the Behrens-Fisher problem and  even under normality, the roots are  non-pivotal. For any general distribution, deriving  a pivotal root is highly infeasible  and hence the corresponding inferences are approximate inferences. To overcome the drawback of non-pivotal roots, the roots can be pre-pivoted using its  empirical distribution } for reducing errors involved in approximate inferences.

Suppose that $J_{i}(\cdot)$ denotes the cdf of $R^{0}_{i}$, whic is generally unknown. Let $\hat{J}_{i}(\cdot)$ denote the bootstrap estimator of $J_{i}(\cdot)$. Instead of $R^{0}_{i}$'s, we consider the prepivots or the transformed roots $\hat{J}_{i}(R_{i})\; i=1,\ldots,p$  to construct a test for two-sample mean vectors test. \DNA{A}s discussed above in Subsection 2.1,  $\hat{J}_{i}(R^{0}_{i})$ is \DNA{a} better pivot than \DNA{the} original root $R^{0}_{i}$. 

Our test procedure is based on the intuition that if \DNA{only} a \DNA{very} few components of $\bm{\mu}_x$ and $\bm{\mu}_y$ are unequal, then a test based on the max-norm should detect that $\bm{\mu}_x$ and $\bm{\mu}_y$ are different with greater power than tests based on sum-of-squares of sample mean differences. With this intuition, a natural choice is the test based on the maximum norm of $\hat{J}_{1}(R^{0}_{1}),\ldots, \hat{J}_{p}(R^{0}_{p})$. Thus, we define the test statistic for the two-sample mean vector test as
\begin{equation} 
T=\max_{1\leq i \leq p}\hat{J}_{i}(R^{0}_{i}).
\end{equation}
The main advantage of $T$ over $\max_{1\leq i \leq p}R^{0}_{i}$ is that  $T$ is less dependent on the characteristics of the population distributions $F_x$ and $F_y$.  Due to the unavailability of the sampling distribution of $T$,  we can approximate the critical value $J^{-1}(1-\alpha)$ or the p-value $P\{T>t\}$ of a test based on $T$ by means of nonparamteric bootstrap, where $J$ is the CDF of $T$.  This approach can be  challenging because the computational complexity grow\DNA{s with}  the dimension $p$ at a linear rate. To avoid such computational burden, we replace each $\hat{J}_{i}(R^{0}_{i})$ by an analytical approximation based on Edgeworth expansion theory.  

%The conventional approach is to use  or the iterated bootstrap the null hypothesis $H_{0}$ at nominal level $\alpha$ if $\hat{J}(T)>1-\alpha$, $\hat{J}$ is the bootstrap estimator of the cdf of $T$ . 

\subsection{Edgeworth expansion based prepivots and the suggested test statistic}
The Edgeworth expansion is a sharpening of the central limit approximation to include \DNA{higher-order moment} terms involving the  skewness and kurtosis of the population distribution. The development of the Edgeworth expansion begins with the derivation of series expansion for the density and distribution
functions of normalized \DNA{\it i.i.d.} sample means \cite{16}. A rigours treatment of these expansion was considered by Cram\'{e}r's \cite{17}. Later on, Bhattacharya and Ghosh (\cite{18}) developed the Edgeworth expansion theory for a smooth function of \DNA{\it i.i.d.} sample mean vectors.  

To construct Edgeworth expansion based prepivots, it is necessary to obtain the Edgeworth expansion of $J_{i}(\cdot)$, the CDF of $R^{0}_i$. Theorem \ref{pro1} provides the  such asymptotic expansions. Throughout this paper, we assume that 
\begin{itemize}
\item[(A1)] $\sup_{i\leq p} \mathbb{E}(X^{8}_{i}) <\infty$, and $\sup_{i\leq
p} \mathbb{E}(Y^{8}_{i}) <\infty$. That is, marginal eighth order moments of the random vectors $\bm{X}=(X_1,\ldots, X_{p})^{\top}$ and $\bm{Y}=(Y_{1},\ldots, Y_{p})^{\top}$ are uniformly bounded.

\item [(A2)] For each $i\in\{1,\ldots, p\}$, the distributions of $(X_{i}, X^{2}_{i})$ and $(Y_{i}, Y^{2}_{i})$ satisfy bivariate Cram\'{e}r's continuity condition (\cite{18}; pp.66-67 in \cite{19}).

\item[(A3)] $n/N \to c$   as $N=n+m\to\infty$. 
\end{itemize} 

\begin{theorem}\label {pro1}
Under the assumptions (A1), (A2), and (A3), the CDF of $R^{0}_{i}$ has the following Edgeworth expansion
\begin{equation}  
P(R_{i}\leq x)=2\Phi(x)-1+2N^{-1}q_{i}(x)\phi(x)+o(N^{-1}),
\end{equation}
holds uniformly in $i$ any $x\geq0$ ; where
\begin{gather*}
q_{i}(x)=x\left[\frac{1}{12}\eta_{1,i}^{-2}\eta_{3,i}(x^2-3)-\frac{1}{18}\eta^{-3}_{1,i}\eta^{2}_{2,i}(x^4+2x^2-3)-\frac{1}{4}\eta^{-2}_{1,i}\{\eta_{4,i} (x^{2}+3)+2\frac{\sigma^{2}_{x,i} \sigma^{2}_{y,i}}{r^{2}_{x}r^{2}_{y}}\}\right],\\
\eta_{1,i}=\frac{\sigma^{2}_{x,i}}{r_{x}}+\frac{\sigma^{2}_{y,i}}{r_{y}}\;\mbox{,}\;\eta_{2,i}=\frac{\gamma_{x,i}}{r^{2}_{x}}-\frac{\gamma_{y,i}}{r^{2}_{y}}\;\mbox{,}\; \eta_{3,i}=\frac{\kappa_{x,i}}{r^{3}_{x}}+\frac{\kappa_{y,i}}{r^{3}_{y}}\;\;\mbox{and}\;\; \eta_{4,i}=\frac{\sigma^{4}_{x,i}}{r^{3}_{x,i}}+\frac{\sigma^{4}_{y,i}}{r^{3}_{y,i}}.
\end{gather*}
\end{theorem}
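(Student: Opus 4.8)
The plan is to recognize $R^{0}_{i}$ as the absolute value of a studentized statistic that is a smooth function of i.i.d.\ sample means, and then to invoke the Edgeworth expansion theory of Bhattacharya and Ghosh \cite{18}. First I would write the signed statistic $T_{i}=(\bar{X}_{i}-\bar{Y}_{i})/\sqrt{\hat{\sigma}^{2}_{xi}/n+\hat{\sigma}^{2}_{yi}/m}$ (under $H_{0}$) as $H(\bm{W}_{i})$, where $\bm{W}_{i}$ collects the first- and second-order sample moments $(\bar{X}_{i},\,n^{-1}\sum_{j}X_{ij}^{2},\,\bar{Y}_{i},\,m^{-1}\sum_{j}Y_{ij}^{2})$ and $H$ is the map forming the centered mean difference in the numerator and the pooled standard error in the denominator. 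Because the denominator is bounded away from zero in a neighborhood of the population moment vector, $H$ is infinitely differentiable there, so $T_{i}$ is a smooth function of sample means and the Bhattacharya--Ghosh framework applies.

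Second, I would verify the hypotheses of the expansion theorem at the required order. Assumption (A2) supplies the bivariate Cram\'{e}r continuity condition for $(X_{i},X_{i}^{2})$ and $(Y_{i},Y_{i}^{2})$, which is exactly what is needed to promote the formal Edgeworth series to a genuine asymptotic expansion rather than a mere moment-matching identity, while assumption (A3), $n/N\to c$, fixes the relative sample proportions $r_{x},r_{y}$ entering the normalization. Assumption (A1), uniform boundedness of eighth-order marginal moments, guarantees that the Taylor remainder of $H$ and the cumulants feeding the polynomial are controlled uniformly in $i$, so that the remainder is $o(N^{-1})$ uniformly over $i\leq p$; this uniformity is the feature that makes the single-coordinate expansion usable in the high-dimensional max-norm construction.

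Third, I would carry out the cumulant bookkeeping. Taylor-expanding $H$ about the true moment vector and collecting terms by order in $N^{-1/2}$, the standardized statistic admits a stochastic expansion whose cumulants I would compute through order $N^{-1}$; these reduce to the combinations $\eta_{1,i}$ (the scaled variance of the numerator), $\eta_{2,i}$ (the third-moment contribution, in which the $Y$-term enters with a minus sign because of the sign of $\bar{Y}_{i}$ in the difference), $\eta_{3,i}$ (the fourth-cumulant contribution), and $\eta_{4,i}$ (the variance-of-the-variance-estimator term, which also produces the cross term $2\sigma^{2}_{x,i}\sigma^{2}_{y,i}/(r^{2}_{x}r^{2}_{y})$ from the fluctuation of the denominator). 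This yields the one-sided expansion $P(T_{i}\leq x)=\Phi(x)+N^{-1/2}p_{1,i}(x)\phi(x)+N^{-1}p_{2,i}(x)\phi(x)+o(N^{-1})$, with $p_{1,i}$ an even polynomial and $p_{2,i}$ odd.

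Finally, passing to the absolute value via $P(R^{0}_{i}\leq x)=P(T_{i}\leq x)-P(T_{i}\leq -x)$ and using $\phi(-x)=\phi(x)$, the even $O(N^{-1/2})$ polynomial $p_{1,i}$ cancels while the odd $O(N^{-1})$ polynomial doubles, giving $P(R^{0}_{i}\leq x)=2\Phi(x)-1+2N^{-1}p_{2,i}(x)\phi(x)+o(N^{-1})$; identifying $q_{i}=p_{2,i}$, which is indeed $x$ times an even polynomial and hence odd, completes the proof. I expect the main obstacle to be the third step: the delta-method expansion of the studentized ratio together with the simultaneous tracking of third and fourth cumulants to order $N^{-1}$ is algebraically heavy, and the real subtlety is ensuring that every moment bound and the Cram\'{e}r condition hold uniformly in $i$, so that the $o(N^{-1})$ remainder is genuinely uniform across all $p$ coordinates.
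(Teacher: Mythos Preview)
Your proposal is correct and follows precisely the route the paper signals: the paper's own proof is deferred to a supplement, but the text explicitly frames $R^{0}_{i}$ via the Bhattacharya--Ghosh \cite{18} smooth-function-of-means machinery, and assumptions (A1)--(A3) are exactly the moment, Cram\'{e}r, and sample-proportion hypotheses that machinery requires at order $N^{-1}$. Your four steps---writing the signed statistic as $H(\bm{W}_{i})$, verifying the expansion hypotheses uniformly in $i$, computing the cumulant polynomials $p_{1,i}$ (even) and $p_{2,i}$ (odd), and then symmetrizing so that $p_{1,i}$ cancels and $p_{2,i}$ doubles---are the standard derivation and match what the supplement must contain.
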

\noindent \pf See Appendix

In Theorem \ref{pro1}, $\gamma_{x,i}$ and $\gamma_{y,i}$ denote the skewness of the $i^{\rm th}$ component of $\bm{X}$ and $\bm{Y}$, respectively and $\kappa_{x,i}$ and $\kappa_{y,i}$ denote the kurtosis of the $i^{\rm th}$ component of $\bm{X}$ and $\bm{Y}$, respectively;  where 
\ben
\gamma_{x,i}&=&E{(X_{i}-\mu_{x,i})^{3}}\;\mbox{,}\; \gamma_{y,i}=E{(Y_{i}-\mu_{y,i})^{3}},\\ \kappa_{x,i}&=&E{(X_{i}-\mu_{x,i})^{4}}-3\sigma^{4}_{x,i}\;\mbox{and}\; \kappa_{y,i}=E{(Y_{i}-\mu_{y,i})^{4}}-3\sigma^{4}_{y,i}.
\een
The bootstrap estimators $\hat{J}_{i}(\cdot)$ have similar asymptotic expansions as $J_{i}(\cdot)$ given in Theorem \ref{pro1} but in their expansion the population moments are replaced with the corresponding sample moments.  Sample analogues of asymptotic expansions in Proposition \ref{pro1} are
\bea \label{eq2}
\hat{J}_i(x)&=&2\Phi(x)-1+2N^{-1}\hat{q}_{i}(x)\phi(x)+o_{p}(N^{-1}),
\eea
for any $x\geq 0$, where  $\hat{q}_{i}(x)$ are the sample versions of $q_{i}(x)$. The asymptotic expansions in (\ref{eq2}) suggest the following \DNA{second-order} analytical approximations to prepivoted test statistics $\hat{J}_{i}(R^{0}_{i})$:
\bea \label{eq3}
\tilde{J}_{i}(R^{0}_{i})=2\Phi(R^{0}_{i})-1+ 2N^{-1}\hat{q}_{i}(R^{0}_{i})\phi(R^{0}_{i}).\eea
Thus, (\ref{eq3}) indicates that the corresponding analytical approximation to $T$ can be expressed as 
\[T^{\prime}=\max_{1\leq i\leq p}\tilde{J}_{i}(R^{0}_{i})  =\max_{1\leq i\leq p} \{2\Phi(R^{0}_{i})-1+ 2N^{-1}\hat{q}_{i}(R^{0}_{i})\phi(R^{0}_{i})\}.\]

The approximation $T^{\prime}$ is a better choice than $T$ since it is computationally more attractive. However, it may be difficult to obtain the  distribution for  $T^{\prime}$. One approach is to use  the bootstrap to estimate the CDF  $T^{\prime}$. To avoid the Monte Carlo approximation associated  the bootstrap technique, we consider \DNA{a transformation-based approach}. The idea is as follows.

\DNA{I}f  a monotone transformation of $T^{\prime}$ has a limiting distribution, then we can use that  transformation to develop the rejection region of the test. Due to the  monotonicity of the aforementioned transformation,  the inference based on $T^{\prime}$ or a monotone transformation of $T^{\prime}$ will be same.  Let $\Phi^{-1}(\cdot)$ denote the inverse function of the cdf of a standard normal distribution.  Theorem \ref{pro2} shows that
\begin{equation}
T^{\prime\prime}=[\max_{1\leq i\leq p} \Phi^{-1}(\tilde{J}_{i}(R^{0}_{i}))]^{2}-2\text{log}(p)+\text{log}{\text{log}(p)}
\end{equation}
has an extreme value distribution as $(n,p)\to \infty$. Thus, $T^{\prime\prime}$ is better choice than $T^{\prime}$  at least from the computational \DNA{cost perspective}. 

To obtain the limiting distribution of $T^{\prime\prime}$, we introduce two more assumptions which are similar to the assumptions of \textit{Lemma} 6 of \cite{4}.  Expression (\ref{a12}) in the Appendix shows that $\Phi^{-1}\{\tilde{J}_{i}(R^{0}_{i})\}=\Phi^{-1}\{J_{i}(R^{0}_{i})\}+o_{p}(N^{-1})$,  where $o_{p}(N^{-1})$ goes to $0$ in probability faster than $N^{-1}$. The probability integral transformation implies that under $H_0$, $\Phi^{-1}\{J_{i}(R^{0}_{i})\}\sim N(0,1)$ and further $(\Phi^{-1}\{J_{1}(R^{0}_{1})\},\ldots, \Phi^{-1}\{J_{p}(R^{0}_{p})\})^{\top}$ jointly follow as a $p$ dimensional multivariate normal distribution with zero mean vector and covariance matrix $\Gamma=(\gamma_{ij})$ (see \cite{20}). $\gamma_{ij}=\text{corr}(\Phi^{-1}\{J_{i}(R^{0}_{i})\}, \Phi^{-1}\{J_{j}(R^{0}_{j})\})$ is the Pearson correlation between $\Phi^{-1}\{J_{i}(R^{0}_{i})\}$ and $\Phi^{-1}\{J_{j}(R^{0}_{j})\}$. For  fixed $p$, $(\Phi^{-1}\{\tilde{J}_{1}(R^{0}_{1})\},\ldots, \Phi^{-1}\{\tilde{J}_{p}(R^{0}_{p})\})^{\top}$ converges weakly to   $(\Phi^{-1}\{J_{1}(R^{0}_{1})\},\ldots, \Phi^{-1}\{J_{p}(R^{0}_{p})\})^{\top}$ based on a multivariate version of Slutsky Theorem. Thus, to establish the the weak convergence of $T^{\prime\prime}$ when $N,p\to\infty$,  assumptions (A4) and (A5) are imposed on the $\Gamma$, the covariance matrix of the random vector $(\Phi^{-1}\{J_{1}(R^{0}_{1})\},\ldots, \Phi^{-1}\{J_{p}(R^{0}_{p})\})^{\top}$.
\begin{itemize}
\item[(A4)]  $\max_{i\leq i<j\leq p} |\gamma_{ij}|\leq r_{1}<1$  for some $0 \leq r_1<1$. (A4) is mild since $\max_{1\leq i<j\leq p}|\gamma_{ij}|=1$ would imply that the
$\Gamma$  is singular. 
\item[(A5)] $\max_{1\leq j\leq p}\sum_{i=1}^{p}\gamma^2_{ij} \leq r_{2}$ for some $r_2>0$. 
\end{itemize}

%The term monotone correlation between two random variables $U$ and $V$ is coined by Kimeldorf and Sampson \cite{22} and is defined as
%\[\rho^{*}(U,V)=\sup_{f,g}\rho\biggl(f(U), g(V)\biggr),\]
 %where the supremum is taken over all monotone functions $f$, $g$ such that $0<\text{Var}(f(U))<\infty$, and $0<\text{Var}(g(U))<\infty;$ $\rho\biggl(f(U), g(V)\biggr)$ is Pearson's correlation coefficient.  It is apparent  the $|\rho(U,V)|\leq \rho^{*}(U,V).$ Monotone correlation is equal to zero implies two random variables are independent (Kimeldorf and Sampson \cite{22}). We now turn to the analysis of the limiting distribution of $T^{\prime\prime}.$

\begin{theorem}\label {pro2}
Let $\text{log}(p)=o(N^2)$. Then under the assumptions (A1)--(A5),  for any $x\in \mathbb{R}$,
\[ \text{P}\{T^{\prime\prime}\leq x\}\to \exp\{-\frac{1}{2\sqrt{\pi}}\exp(-x/2)\}\;\; \text{as}\;\;N\to \infty.\]
\end{theorem}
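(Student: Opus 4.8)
The plan is to reduce the statement to a classical extreme--value result for the maximum of a weakly dependent Gaussian field, and then to verify that the errors introduced by the Edgeworth approximation are negligible on the scale at which that maximum lives. Write $W_i=\Phi^{-1}\{J_i(R^0_i)\}$ and $\widetilde W_i=\Phi^{-1}\{\tilde J_i(R^0_i)\}$. Two facts recorded before the statement drive everything: first, the Appendix expansion gives $\widetilde W_i=W_i+o_p(N^{-1})$ (uniformly in $i$); second, under $H_0$ the probability integral transform forces $(W_1,\dots,W_p)^\top\sim N_p(\bm{0},\Gamma)$, where $\Gamma$ is a correlation matrix obeying (A4) and (A5). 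It therefore suffices to prove that $\max_i W_i^2-2\log p+\log\log p$ has the stated Gumbel limit, and then to transfer the conclusion from $W_i$ to $\widetilde W_i$.

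First I would dispose of the approximation error. Put $y_p=x+2\log p-\log\log p$ and $t_p=\sqrt{y_p}$. Since $\max_i|W_i|=O_p(\sqrt{\log p})$ for a Gaussian field with correlations bounded away from one, and $\max_i|\widetilde W_i-W_i|=o_p(N^{-1})$, the identity $\widetilde W_i^2=W_i^2+2W_i(\widetilde W_i-W_i)+(\widetilde W_i-W_i)^2$ gives
\[
\bigl|\max_i\widetilde W_i^2-\max_i W_i^2\bigr|\le 2\,\bigl(\max_i|W_i|\bigr)\bigl(\max_i|\widetilde W_i-W_i|\bigr)+\bigl(\max_i|\widetilde W_i-W_i|\bigr)^2 .
\]
Under the hypothesis $\log p=o(N^2)$ one has $\sqrt{\log p}=o(N)$, so the right--hand side is $o_p(1)$. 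Because the candidate limit law is continuous, Slutsky's theorem then lets me replace $\widetilde W_i$ by the exactly Gaussian $W_i$ without altering the limit.

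The core is the extreme--value analysis of $\max_i W_i^2$ for the Gaussian vector $(W_1,\dots,W_p)$. I would run the Chen--Stein Poisson approximation (equivalently, the factorial--moment method) on the events $A_i=\{|W_i|>t_p\}$ and the count $S=\sum_{i=1}^p\mathbf{1}(A_i)$, using $\{\max_i W_i^2\le y_p\}=\{S=0\}$. A Mills--ratio computation gives $P(A_i)=2\{1-\Phi(t_p)\}$, so the first moment $\mathbb E[S]=p\,P(A_1)$ converges to the constant $\lambda=\lambda(x)$ of the form $(\text{const})\,e^{-x/2}$ appearing in the exponent of the stated limit; this is precisely where the centering $2\log p-\log\log p$ is calibrated so the $p$--fold tail mass stabilizes. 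To upgrade this to $S\Rightarrow\mathrm{Poisson}(\lambda)$, I must show the pairwise interaction $\sum_{i\ne j}P(A_i\cap A_j)$ converges to $\lambda^2$, i.e. that the joint exceedances essentially factorize. Here (A4) and (A5) enter: the bivariate tail bound $P(|W_i|>t_p,\,|W_j|>t_p)\le C\,t_p^{-2}\exp\{-t_p^2/(1+|\gamma_{ij}|)\}$, valid since $|\gamma_{ij}|\le r_1<1$ by (A4), is summed after splitting indices by the size of $|\gamma_{ij}|$; the strongly correlated pairs are few because $\sum_i\gamma_{ij}^2\le r_2$ by (A5), while the bulk of nearly uncorrelated pairs each contribute $(1+o(1))(\lambda/p)^2$. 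The same split controls the higher factorial moments, $\mathbb E[S(S-1)\cdots(S-k+1)]\to\lambda^k$, and Poisson convergence yields $P(S=0)\to e^{-\lambda}$, the asserted limit.

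The main obstacle is the dependence bookkeeping in the pairwise (and higher) exceedance sums: making the split by correlation magnitude quantitative, choosing the threshold so that both the strongly correlated pairs (few, by (A5)) and the weakly correlated but numerous pairs contribute nothing beyond $\lambda^2$, and tracking the polynomial Mills--ratio factors attached to $t_p\asymp\sqrt{2\log p}$. A secondary technical point is securing the \emph{uniform} (in $i$) $o_p(N^{-1})$ remainder used in the first step, so that a union bound over the $p$ coordinates survives once $\log p=o(N^2)$; this is exactly where the uniform eighth--moment bound (A1) and the Cram\'er condition (A2) underlying Theorem \ref{pro1} are required.
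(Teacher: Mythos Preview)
Your proposal is correct and follows essentially the same route as the paper: reduce $T''$ to $\max_i W_i^2-2\log p+\log\log p$ by controlling the cross term via $\max_i|W_i|=O_p(\sqrt{\log p})$ together with the uniform $o_p(N^{-1})$ Edgeworth remainder (the paper packages this as Lemmas \ref{le2} and \ref{le3}), and then invoke the Gumbel limit for the squared maximum of the Gaussian vector $(W_1,\dots,W_p)\sim N_p(\bm 0,\Gamma)$ under (A4)--(A5). The only difference is cosmetic: the paper cites Lemma~6 of Cai, Liu and Xia \cite{4} as a black box for that last step, whereas you sketch its Chen--Stein/Poisson proof; both arguments are the same in substance.
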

\noindent{ \textbf{Proof}} See Appendix

% \indent
The limit distribution is type I extreme value distribution.  On the basis of the limiting null distribution, the approximate p-value is $1-\exp\{-\frac{1}{2\sqrt{\pi}}\exp(-t^{\prime\prime}/2)\}$ when $\text{log}(p)$ grows at a rate smaller than $N^2$. Based on Theorem \ref{pro2}, we can also construct asymptotically $\alpha$-level test that rejects the null hypothesis in \eqref{eq1} if $T^{\prime\prime}> q_{\alpha}$, where $q_{\alpha}= -2\log(2 \sqrt{\pi})-2 \log \log(1-\alpha)^{-1}$. 

Our test for small sample sizes and large dimension is in the same spirit as the test proposed by Cai et al. \cite{3}  for analysis of rare signals. Their test involves estimation of the inverse of $p\times p$ covariance matrices. Matrix inversion it is time-consuming $p$ \DNA{when} is large, whereas our test is computationally more efficient.  

We now analyse the power of the test. To study the asymptotic power, we consider a local alternative condition that the maximum absolute \DNA{value} of the standardized signals is \DNA{higher in order} than $\sqrt{\text{log}(p^2/\text{log}(p))},$ that is  
\bea \label{eq4}  \sqrt{\text{log}(p^2/\text{log}(p))}=o\biggl(\max_{1\leq j\leq p}\frac{|\mu_{xj}-\mu_{yj}|}{\sqrt{\sigma^2_{xj}/n+\sigma^2_{yj}/m}}\biggr).
\eea
Theorem \ref{pro3} shows that if (\ref{eq4}) holds, then  the power of the test, $T^{\prime\prime}>q_{\alpha}$, converges to 1  as $N,p\to\infty$.
\begin{theorem}\label{pro3}
Under the assumptions of Theorem \ref{pro2}, if $\bm{\mu}_x$ and $\bm{\mu}_y$ satisfy the condition in (\ref{eq4}), then
\[P(T^{\prime\prime}>q_{\alpha}|H_{1})\to 1.\]
\end{theorem}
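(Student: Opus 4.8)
The plan is to bound the power from below by retaining only the single coordinate carrying the strongest signal, which avoids any union bound over the $p$ coordinates. Let $j^\ast$ be a deterministic index attaining
\[
\Delta_N \;:=\; \max_{1\le j\le p}\frac{|\mu_{xj}-\mu_{yj}|}{\sqrt{\sigma^2_{xj}/n+\sigma^2_{yj}/m}},
\]
the quantity on the right of \eqref{eq4}. Because the maximum defining $T^{\prime\prime}$ dominates any fixed summand,
\[
T^{\prime\prime}\;\ge\;\bigl[\Phi^{-1}\bigl(\tilde J_{j^\ast}(R^0_{j^\ast})\bigr)\bigr]^2-2\log(p)+\log\log(p),
\]
and since $q_\alpha$ is a fixed constant it suffices to prove that this lower bound tends to $+\infty$ in probability under $H_1$; that forces $P(T^{\prime\prime}>q_\alpha\mid H_1)\to1$. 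Using $2\log(p)-\log\log(p)=\log\bigl(p^2/\log p\bigr)$, the task reduces to showing that $[\Phi^{-1}(\tilde J_{j^\ast}(R^0_{j^\ast}))]^2$ exceeds $\log(p^2/\log p)$ by an amount diverging to infinity.

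First I would pin down the size of $R^0_{j^\ast}$ under $H_1$. Writing $\bar X_{j^\ast}-\bar Y_{j^\ast}=(\mu_{xj^\ast}-\mu_{yj^\ast})+\xi$, the centered part $\xi$ has scale $s_{j^\ast}:=\sqrt{\sigma^2_{xj^\ast}/n+\sigma^2_{yj^\ast}/m}$, so $\xi/s_{j^\ast}=O_p(1)$ by the central limit theorem, while (A1) makes $\hat\sigma^2_{xj^\ast}$ and $\hat\sigma^2_{yj^\ast}$ ratio-consistent, giving $\sqrt{\hat\sigma^2_{xj^\ast}/n+\hat\sigma^2_{yj^\ast}/m}\big/s_{j^\ast}\to1$ in probability. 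Hence
\[
R^0_{j^\ast}\;=\;\frac{\,|\mu_{xj^\ast}-\mu_{yj^\ast}|/s_{j^\ast}+O_p(1)\,}{1+o_p(1)}\;=\;\Delta_N\bigl(1+o_p(1)\bigr),
\]
where the last step uses that \eqref{eq4} forces $\Delta_N\to\infty$, so the $O_p(1)$ fluctuation is swamped. Next I would transfer this divergence through the prepivoting map. By the expansions \eqref{eq2}--\eqref{eq3} and the relation $\Phi^{-1}\{\tilde J_i(R^0_i)\}=\Phi^{-1}\{J_i(R^0_i)\}+o_p(N^{-1})$ recorded in the Appendix, the leading behavior of $\Phi^{-1}(\tilde J_{j^\ast}(x))$ is that of $\Phi^{-1}(2\Phi(x)-1)$. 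Writing $\bar\Phi=1-\Phi$ and using $\bar\Phi(t)=\phi(t)/t\,(1+o(1))$ as $t\to\infty$, a short computation with $\bar\Phi(y)=2\bar\Phi(x)$ yields $\Phi^{-1}(2\Phi(x)-1)=x\,(1+o(1))$. Therefore $\Phi^{-1}(\tilde J_{j^\ast}(R^0_{j^\ast}))=R^0_{j^\ast}(1+o_p(1))=\Delta_N(1+o_p(1))$, so that
\[
\bigl[\Phi^{-1}(\tilde J_{j^\ast}(R^0_{j^\ast}))\bigr]^2-\log\!\bigl(p^2/\log p\bigr)\;=\;\Delta_N^2\bigl(1+o_p(1)\bigr)-\log\!\bigl(p^2/\log p\bigr).
\]
Condition \eqref{eq4} is exactly $\Delta_N^2/\log(p^2/\log p)\to\infty$, so the right-hand side tends to $+\infty$ in probability and the conclusion follows.

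The main obstacle lies in the last transfer step. Theorem~\ref{pro1} supplies the Edgeworth expansion uniformly in $x\ge0$, but here it must be evaluated at the \emph{random, diverging} argument $R^0_{j^\ast}$, and one must ensure both that $\tilde J_{j^\ast}(R^0_{j^\ast})$ stays strictly below $1$ (so $\Phi^{-1}$ is applied inside its domain) and that the quantile asymptotics near $1$ apply. The delicate point is that $1-\tilde J_{j^\ast}(x)=2\bar\Phi(x)-2N^{-1}\hat q_{j^\ast}(x)\phi(x)$, in which the polynomial factor of $\hat q_{j^\ast}$ (of degree up to five) competes with the tail mass $2\bar\Phi(x)\asymp\phi(x)/x$; their ratio is of order $N^{-1}x^6$. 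When the signal is moderate, say $\Delta_N^6=o(N)$, this ratio vanishes and the expansion controls $\Phi^{-1}(\tilde J_{j^\ast})$ uniformly on a high-probability interval such as $[\Delta_N/2,\,2\Delta_N]$, where monotonicity of the leading term $2\Phi(\cdot)-1$ delivers the required lower bound; when the signal is so strong that the correction would push $\tilde J_{j^\ast}$ toward or past $1$, the quantile $\Phi^{-1}(\tilde J_{j^\ast}(R^0_{j^\ast}))$ is only driven larger, so rejection is even more certain and that regime is dispatched by a separate truncation. Reconciling these two regimes into a single uniform lower bound, under only $\log p=o(N^2)$, is the technical heart of the argument.
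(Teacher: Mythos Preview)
Your approach is essentially the same as the paper's: both isolate the coordinate with the largest standardized signal, argue that $R^0$ there is of size $\Delta_N\to\infty$, and then use Mills-ratio/quantile asymptotics near $1$ to conclude that $[\Phi^{-1}(\tilde J(R^0))]^2$ behaves like $\Delta_N^2$, which dominates $\log(p^2/\log p)$ by \eqref{eq4}. The paper first writes $\tilde J_i(R^0_i)=2\Phi(|S_i+\psi_i|)-1+o_p(1)$, takes the maximum, replaces $\max_i|S_i+\psi_i|$ by $\max_i|\psi_i|$, and then applies $\Phi^{-1}$ together with the approximation $\Phi^{-1}(1-1/t)\approx\sqrt{2\log t}$; your single-index lower bound is a bit cleaner because it sidesteps any uniformity-in-$i$ question when passing the $o_p(1)$ through the maximum.

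The obstacle you flag---controlling the Edgeworth correction $N^{-1}\hat q_{j^\ast}(x)\phi(x)$ against the tail mass $2\bar\Phi(x)$ at the diverging argument $x\asymp\Delta_N$, so that $\tilde J_{j^\ast}$ stays in $(0,1)$ and $\Phi^{-1}$ is well behaved---is a genuine subtlety, but the paper's own proof does not handle it more carefully: it simply absorbs the correction into an additive $o_p(1)$ and then applies $\Phi^{-1}$ without tracking how that $o_p(1)$ is magnified near the boundary. Your proposed dichotomy (moderate signals with $\Delta_N^6=o(N)$ versus very strong signals handled by truncation/monotonicity) is a sensible route to make this step rigorous and, if carried out, would in fact go beyond the level of detail in the paper's argument.
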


\noindent{ \textbf{Proof}} See Appendix

% \noindent
Theorem \ref{pro3} indicates that the proposed test procedure should detect the discrepancy between two mean vectors $\bm{\mu}_{x}$, and $\bm{\mu}_{y}$ with probability tending probability to 1 if the maximum of the absolute standardized signals is of order higher than $\sqrt{\log (p^2/\log(p))}$ as $p, N\to\infty$. %Under the equality of covariance matrices, $\Sigma_{x}=\Sigma_{y}=\Sigma$, the asymptotic power of the suggested equals 1 provided $ \max_{1\leq j\leq p}\frac{|\mu_{xj}-\mu_{yj}|}{\sigma_{jj}\sqrt{1/m +1/n}}> N^{1/2-\epsilon},$ where $\sigma^{2}_{jj}$ is the $j$th diagonal element of $\Sigma$. 
\DNA{In comparison,} Cai et al.'s test \cite{4}, \DNA{which is }based on \DNA{a} supremum-type test statistic, has an asymptotic power 1 provided the maximum of the absolute standardized signals is of higher order than $\sqrt{c \log(p^2)}$, for a certain unknown constant $c>0$.  \DNA{T}he asymptotic powers of sum-of-squares-type tests (Bai and Saranadasa \cite{7}; Srivastava and Du \cite{8}; Chen and Qin \cite{9}) converge to 1 \DNA{when} $Np^{-1/2}\|\bm{\mu}_{x} - \bm{\mu}_{y}\|^2_{2}\to\infty$  as $N, p\to\infty$.  The asymptotic power of GCT (Gregory et al. \cite{12}) converges to 1  provided $p^{-1/2}\sum\limits_{j=1}^{p}\frac{(\mu_{xj}-\mu_{yj})^2}{\sqrt{\sigma^2_{xj}/n +\sigma^2_{yj}/m}}\to \infty$ as $N, p\to\infty$. 

\DNA{To better understand the asymptotic power of the proposed test in comparison to the other tests}, consider \DNA{the following} scenario. \DNA{Suppose that all the signals are of } \DNA{equal} strength $\delta$, have unit variances and equal sample sizes for the two groups, $n=m$. Under the alternative, let $\mathcal{A}=\{i\in\{1,\ldots, p\}: \mu_{xi}-\mu_{yi}\neq 0\}$ denote the set of locations of the signals, and \DNA{let} the cardinality of $A$ \DNA{be} $p^r$, where $r\in(0,1)$. Under this scenario, power of the sum-of-squares-type tests  converge to 1 as $N,p\to \infty$ if $\sqrt{p}=o(2n\|\bm{\mu}_{x} - \bm{\mu}_{y}\|^2_{2})=o(p^{r}n\delta^2)$, that is if $\sqrt{p}/p^r=o(n\delta^2)$.   Similarly, the power of the GCT test converges to 1 if $\sqrt{p}/p^r=o(\sqrt{n}\delta^2)$, as $N,p\to \infty$. On the other hand, the power of our suggested test procedure goes to 1 as $N,p\to \infty$ if $\sqrt{\text{log}(p^2/\text{log}(p))}=o(\sqrt{n}\delta)$. Hence the power of sum-of-squares-type tests depend on how large $(\sqrt{n}\delta)^2$ compared to $p^{1/2-r}$. \DNA{In comparison}, the power of our \DNA{proposed} test depends on how large $\sqrt{n}\delta$ compared to $\sqrt{\text{log}(p^2/\text{log}(p))}$\DNA{, which has a much smaller rate of increase compared to $p^{1/2-r}$}. Under this particular example, the   sum-of-squares-type tests are less powerful when $r<1/2$ compared to $r>1/2$. Extending further, we may say that the our test procedure is \DNA{asymptotically} more powerful than the sum-of-squares-type tests with $r<1/2$.  

%\DNA{As it is with several multivariate methods, none of the test statistics is uniformly most powerful for the mean vector hypothesis.} Generally, a test procedure for the high-dimensional mean vector test can achieve highest powers under the specific alternative. For instance, the sum-of-squares-type tests are more powerful than non-sum-of-squares-type tests when there is a large proportion of small to moderate component-wise differences,  known as dense alternative, i.e. $r>1/2$.  \DNA{The} supremum-type test statistics give  more powerful against \DNA{a} sparse alternative, i.e. $r<1/2$ compared to sum-of-squares-type tests. \DNA{(Discuss about our test. What is the condition in which our test has higher power - ultra-sparse alternatives with high spikes?)}

%Further, let $p=n^{3}$, then the order relation $\sqrt{p}/p^r=o(n\delta^2)$ does not hold if $r<1/2$ and in such a case the powers of sum-of-squares-type tests do not converge to 1, i.e. they are less powerful whereas in such a case the power of our suggested test procedure goes to 1.  

\section{Simulation Study}
\label{sec:simulation}

In this section we present empirical results based on simulated data. We compare \DNA{the empirical type I error rate and power of the proposed} test procedure, denoted by \textit{PREPR}, \DNA{under different models} with the \DNA{following} test methods: BS (Bai and Saranadasa \cite{7}), SD (Srivastava and DU \cite{8}),  CQ (Chen and Qin \cite{9}), CLX (Cai et al. \cite{4}), GM (Karl et al.\cite{12}), GL and GM (Karl et al.\cite{12}). GM and GL are  two variants of \cite{12}: moderate-p, and large-p.
Following \cite{12}, we chose the Parzen window with lag window size $2\sqrt{p}/3$ for both GM and GL. The number of variety of multivariate distributions and parameters are too large to allow a comprehensive, all-encompassing comparison. Hence we have chosen certain representative examples for illustration. We considered two setups to generate the data.

\subsection{Simulation setup  1}
\DNA{In setup 1, we considered two scenarios for the marginal distributions of the elements of the random vectors $\bm{X}$ and $\bm{Y}$. In} Scenario 1, each of the $p$ components of the random vectors $\bm{X}$ and $\bm{Y}$ follow a standard normal distribution. In Scenario 2, each of the $p$ components of the random vectors $X$ and $Y$ follow a centralized gamma(2,1) distribution. \DNA{To impose a dependence structure on joint distributions of $X$ and $Y$, we used a factor model as described in \cite{9}}. Simulation setup 1 considers variances for the all $p$ components of $\bm{X}$ and $\bm{Y}$. Under each of the above two scenarios, we consider three dependence models to  generate correlated observations: 
\begin{itemize}
\item[]\textbf{(M1)} Model 1 considers a moving average process of order 10. The moving coefficients vector was  a normalized vector of dimension 10, where the components of that vector were generated independently from Uniform (2, 3) and were kept fixed once generated through our simulations.
\item[]\textbf{(M2)} In Model 2, we consider long-range (LR) dependent structure that was also used in \cite{12}.  The self similarity parameter of the LR dependent structure was set at 0.625.  The algorithm to generate LR dependent samples can be found in \cite{21}.
\item[]\textbf{(M3)} In Model 3, we set $\Sigma_{x} = \Sigma_{y} = (1-\rho)\mathcal{I}_{p}+\rho\mathcal{J}_{p}$, where $\rho=0.4$ and $\mathcal{J}_p = \mathbf{1} \mathbf{1}^{\top}$ is the all ones matrix.
\end{itemize}

We considered three values for the dimension: $p=200, 1000$ and $3000$. For each combination of scenario (1 or 2), model (\textbf{M1}/\textbf{M2}/\textbf{M3}) and $p$, the empirical type I error and power were computed based on 2000 \DNA{randomly generated data sets}. The empirical powers were computed on against the numbers of signals $p\times r$, where we considered $r = 0.5\%, 1\%$ and $3\%$. \DNA{The parameter $r$ denotes the sparsity of the signal.} The sample sizes $(n, m)$ were \DNA{set} as $(35,35)$ and $(60, 40)$. In all the scenarios, we set  $\mu_{x}=(c_{p r}, 0_{p (1-r)})^{\top}$, where  $c_{p r} = \delta  (r/p, 2r/p, \ldots, 1)$ and $0_{p(1-r)}$ denotes a $ p(1-r)$-vector with entries equal to 0. In setup 1, we computed powers for $\delta=0.6, 0.9$. For the empirical type I error calculation, we assumed $\delta=0$. Without loss of generality, we set $\mu_{y}=0_{d\times 1}$, the mean vector of $(Y_{1},\ldots, Y_{d})^{\top}$ \DNA{as the zero vector for all simulations}.

\begin{table}[!htbp]
	\centering
	\caption{Empirical type I errors of PREPR and other test statistics at $5\%$ significance level. Data was generated under different models (M1/M2/M3) and the two scenarios under simulation setup 1. The dimension was set as $p=200$.}
	%	\spacingset{1.2}	
	%\small\addtolength{\tabcolsep}{10pt}  % shrinks the size of the table
	\begin{adjustbox}{width=0.6\textwidth}
		\begin{tabular}{l|cccccccc}
			%
			% ^.
			\hline
			\hline
			 & \multicolumn{7}{c}{Scenario 1} \\
			
			Model &\DNA{PREPR} & BS& SD & CQ & GM& GL& CLX& \\
			\hline
			\hline
			 & \multicolumn{7}{c}{$n_1=n_2=35$} \\
			\hline
			\hline
			M1& 0.039& 0.062& 0.038& 0.062& 0.165& 0.186& 0.061\\
			M2& 0.045& 0.056& 0.0507& 0.057& 0.073& 0.086& 0.085\\
			M3& 0.046& 0.053& 0.042& 0.053& 0.073& 0.085& 0.080\\
			\hline
			& \multicolumn{7}{c}{ $n_1=60,\;\; n_2=40$} \\
			\hline
		    M1& 0.041& 0.071& 0.049& 0.070& 0.161& 0.175& 0.069\\
			M2  & 0.047& 0.059& 0.056& 0.059& 0.067& 0.071& 0.073\\
			M3 & 0.048& 0.059& 0.047& 0.062& 0.063& 0.077& 0.070\\
			\hline\hline
			 & \multicolumn{7}{c}{Scenario 2} \\

			Model &\DNA{PREPR} & BS& SD & CQ & GM& GL&CLX& \\
			\hline\hline
			 & \multicolumn{7}{c}{$n_1=n_2=35$} \\
			\hline
			\hline
			M1& 0.038  &0.058& 0.045& 0.059& 0.164& 0.191& 0.056\\
			M2  &0.042&0.059& 0.047& 0.062& 0.067& 0.119& 0.067\\
			M3 & 0.044& 0.069& 0.054& 0.069& 0.068& 0.094& 0.070 \\
			\hline
			& \multicolumn{7}{c}{$n_1=60,\;\; n_2=40$} \\
			\hline
		    M1 & 0.045& 0.058& 0.041& 0.056& 0.148& 0.170& 0.065\\
			M2 & 0.036& 0.065& 0.048& 0.065& 0.063& 0.099& 0.066\\
			M3  & 0.044& 0.055& 0.046& 0.055& 0.067& 0.089& 0.075\\
			\hline

		\end{tabular}
	\end{adjustbox}
\label {table 1}	
\end{table}

\begin{table}[!htbp]
	\centering
	\caption{Empirical type I errors of the test statistics at $5\%$ significance level. Data was generated under simulation setup 1 and when the dimension was fixed as $p=1000$.}
	%	\spacingset{1.2}	
	%\small\addtolength{\tabcolsep}{10pt}  % shrinks the size of the table
	\begin{adjustbox}{width=0.6\textwidth}
		\begin{tabular}{l|cccccccc}
			%
			% ^.
			\hline
			\hline
			 & \multicolumn{7}{c}{Scenario 1} \\
			
			Model &\DNA{PREPR}& BS& SD & CQ & GM& GL& CLX& \\
			\hline
			\hline
			 & \multicolumn{7}{c}{$n_1=n_2=35$} \\
			\hline
			\hline
			M1& 0.047  & 0.058& 0.031 & 0.058 & 0.077& 0.087& 0.110\\
			M2& 0.049   & 0.058 & 0.031 & 0.057 & 0.086&0.062 & 0.116 \\
			M3& 0.051  & 0.049&0.034& 0.049& 0.070& 0.064& 0.109\\
			\hline
			& \multicolumn{7}{c}{ $n_1=60,\;\; n_2=40$} \\
			\hline
		    M1& 0.044  & 0.059& 0.035& 0.060 & 0.093&0.086 &0.084\\
			M2  & 0.043& 0.049 & 0.032 & 0.048 & 0.079&0.059&0.083\\
			M3 & 0.050 & 0.057 & 0.035 & 0.058 & 0.062& 0.048&0.097 \\
			\hline\hline
			 & \multicolumn{7}{c}{Scenario 2} \\

			Model &\DNA{PREPR}& BS& SD & CQ & GM& GL&CLX& \\
			\hline\hline
			 & \multicolumn{7}{c}{$n_1=n_2=35$} \\
			\hline
			\hline
			M1& 0.042  &0.064& 0.033 & 0.065 & 0.077& 0.101& 0.102\\
			M2  & 0.037 & 0.043 & 0.021 & 0.045 & 0.089&0.190 & 0.085 \\
			M3 & 0.031 & 0.056 & 0.026 & 0.059 & 0.133& 0.180& 0.097 \\
			\hline
			& \multicolumn{7}{c}{$n_1=60,\;\; n_2=40$} \\
			\hline
		    M1 & 0.044 & 0.062& 0.040& 0.062 & 0.082&0.097 & 0.088\\
			M2 & 0.040 & 0.054 & 0.033 & 0.054 & 0.083&0.128 & 0.088 \\
			M3  & 0.039 & 0.045 & 0.030 & 0.048 &0.073 & 0.094& 0.086\\
			\hline

		\end{tabular}
	\end{adjustbox}
\label {table 2}	
\end{table}
                 
\begin{table}[!htbp]
	\centering
	\caption{Empirical type I errors of the test statistics at $5\%$ significance level. Data was generated under simulation setup 1 and when the dimension was fixed as $p=3000$.}

	%	\spacingset{1.2}	
	%\small\addtolength{\tabcolsep}{10pt}  % shrinks the size of the table
	\begin{adjustbox}{width=0.6\textwidth}
		\begin{tabular}{l|cccccccc}
			%
			% ^.
			\hline
			\hline
			 & \multicolumn{7}{c}{Scenario 1} \\
			&\DNA{PREPR}& BS& SD & CQ & GM& GL& CLX& \\
			\hline
			\hline
			Model &\multicolumn{7}{c}{$n_1=n_2=35$} \\
			\hline
			\hline
			M1& 0.051  & 0.049& 0.023 & 0.049 & 0.069&0.067 & 0.142\\
			M2& 0.046   & 0.048 & 0.020 & 0.047 & 0.172&0.057 & 0.133 \\
			M3&0.050   & 0.056 &0.020 &0.055 &0.150&0.064  &0.131  \\
			
			\hline
			& \multicolumn{7}{c}{$n_1=60, \;\; n_2=40$} \\
			\hline
		    M1 & 0.045 & 0.049& 0.020& 0.0496 & 0.076 &0.063& 0.107\\
			M2 & 0.048 & 0.051 & 0.021 & 0.051 & 0.167&0.058 & 0.142 \\
			M3 &0.051  &0.053  &0.028  & 0.056 & 0.105 &0.056  &0.115\\
			\hline\hline
			 & \multicolumn{7}{c}{Scenario 2} \\
			
			Model &\DNA{PREPR}& BS& SD & CQ & GM& GL& CLX& \\
			\hline\hline
			& \multicolumn{7}{c}{$n_1=n_2=35$} \\
			\hline
			\hline
			M1 & 0.045 & 0.057& 0.024 & 0.057 & 0.078 &0.08& 0.120\\
			M2  & 0.032 & 0.044 & 0.015 & 0.046 & 0.164&0.147& 0.097\\
			M3 &0.037& 0.041& 0.012& 0.043& 0.129& 0.190& 0.1107\\
			\hline
			& \multicolumn{7}{c}{$n_1=60,\;\; n_2=40$} \\
			\hline
		    M1 & 0.041 & 0.049& 0.020& 0.049 & 0.051 &0.060& 0.100\\
			M2  & 0.044 & 0.052 & 0.023 & 0.053 & 0.165 &0.186& 0.107\\
			M3 &0.042& 0.047& 0.021& 0.048& 0.114& 0.116& 0.106\\
			\hline
			
		\end{tabular}
	\end{adjustbox}
	\label {table 3}	
\end{table}

%%%%%%%%%%%%%%%%%%%%%%%%%%%%% Powers %%%%%%%%%%%%%%%%%%%%%%%%%%%%%%%%

\begin{table}[!htbp]
	\centering
	\caption{Empirical power of the test statistics at $5\%$ significance level. Data was generated under Scenario 1 within simulation setup 1 and the dimension was fixed as $p = 200$.}
	
	%	\spacingset{1.2}	
	%\small\addtolength{\tabcolsep}{10pt}  % shrinks the size of the table
	\resizebox{\textwidth}{!}{
		\begin{tabular}{lcc|*{7}{c}|*{7}{c}}

			\hline
			\hline
r & Model & $\delta$ &  \multicolumn{7}{c|}{$p = 200$} & \multicolumn{7}{c}{$p = 200$} \\
& && \DNA{PREPR} & BS& SD & CQ & GM& GL&CLX& \DNA{PREPR} & BS & SD & CQ & GM& GL&CLX\\
\hline
\hline
&&& \multicolumn{7}{c|}{$n_{1}=n_{2}=35$} &  \multicolumn{7}{c}{$n_{1}=60\;\; n_{2}=40$} \\
\hline
\hline
\multirow{3}{*}{$0.5\%$}  & M1 & \multirow{3}{*}{0.6}  &0.056& 0.066& 0.045& 0.067& 0.142&0.158& 0.093 &0.086&  0.092& 0.059& 0.100& 0.146& 0.149& 0.129 \\
& M2& & 0.138& 0.105& 0.093&0.106& 0.065& 0.049& 0.222&0.219& 0.147& 0.122& 0.146& 0.0645& 0.053& 0.294\\
 &M3& &0.122& 0.104& 0.076&0.103& 0.066& 0.064& 0.205&0.218& 0.126& 0.100& 0.124& 0.044& 0.044& 0.284\\
\hline
\multirow{3}{*}{$0.5\%$}  & M1  &   \multirow{3}{*}{0.9}  &0.142& 0.082& 0.061& 0.082& 0.135& 0.146& 0.213&0.271& 0.112& 0.074& 0.111& 0.116& 0.122& 0.400 \\
  & M2  &  &0.475&  0.221& 0.191& 0.221& 0.056& 0.039& 0.569&0.731& 0.313& 0.289& 0.314& 0.046& 0.033& 0.788\\
 &M3& & 0.460& 0.201& 0.165& 0.201& 0.059& 0.042& 0.560& 0.705& 0.279& 0.244& 0.280& 0.047& 0.034& 0.7660\\
\hline
\hline
&&& \multicolumn{7}{c|}{$n_{1}=n_{2}=35$} & \multicolumn{7}{c}{$n_{1}=60\;\; n_{2}=40$} \\
\hline
\hline
\multirow{3}{*}{$1\%$}&M1 & \multirow{3}{*}{0.6} & 0.069& 0.100& 0.073& 0.099& 0.132& 0.133& 0.117 & 0.114& 0.123& 0.093& 0.122& 0.122& 0.127& 0.165\\
&M2 & & 0.211& 0.266& 0.224& 0.2667& 0.083& 0.049& 0.319&0.372& 0.366& 0.340&0 0.365& 0.063& 0.041& 0.461\\
&M3&& 0.186& 0.217& 0.179& 0.219& 0.060& 0.041& 0.2725&0.333& 0.302& 0.272& 0.303& 0.047& 0.036& 0.420\\ 
\hline
\multirow{3}{*}{$1\%$} & M1 & \multirow{3}{*}{0.9} & 0.136 &0.098& 0.059& 0.098& 0.077& 0.064& 0.236&
0.327& 0.187& 0.138& 0.188&  0.082& 0.075& 0.402\\
& M2 &  & 0.704&  0.642& 0.599& 0.641& 0.049& 0.027& 0.786&0.887& 0.825& 0.792& 0.824& 0.033& 0.018& 0.927\\
&M3& & 0.628& 0.533& 0.482& 0.532& 0.049& 0.029& 0.726&0.850& 0.735& 0.699& 0.733& 0.028& 0.017& 0.908\\
\hline
\hline

&&& \multicolumn{7}{c|}{$n_{1}=n_{2}=35$} & \multicolumn{7}{c}{$n_{1}=60\;\; n_{2}=40$} \\ 
\hline
\hline
\multirow{3}{*}{$3\%$}  &  M1 & \multirow{3}{*}{0.6}  & 0.101& 0.190& 0.145& 0.190& 0.121& 0.119& 0.168 &0.171&  0.260& 0.208& 0.258& 0.126& 0.113& 0.229\\
&  M2 & & 0.375&  0.708& 0.669& 0.708& 0.153& 0.103& 0.509&0.630& 0.885& 0.864& 0.884& 0.129& 0.091& 0.725\\
&M3&&0.360& 0.654& 0.607& 0.654& 0.139& 0.089& 0.4945&0.568& 0.823& 0.794& 0.825& 0.125& 0.089& 0.664\\
\hline
\multirow{3}{*}{$3\%$}  &  M1 & \multirow{3}{*}{0.6}  & 0.334& 0.400&  0.332& 0.400& 0.127& 0.107& 0.436& 0.519&0.565& 0.494& 0.561& 0.147& 0.128& 0.598\\
&  M2 & & 0.916& 0.996& 0.993& 0.996& 0.123& 0.069& 0.969&0.993& 0.999& 0.999& 0.999& 0.069& 0.049& 0.997\\
&M3&&0.901& 0.992& 0.986& 0.992& 0.109& 0.074& 0.944& 0.985& 0.999& 0.999& 0.999& 0.089& 0.067& 0.993\\
\hline
\end{tabular}
	}
\label {table 4}
\end{table}   

\begin{table}[!htbp]
	\centering
	\caption{Empirical power of the test statistics at $5\%$ significance level. Data was generated under Scenario 2 within simulation setup 1 and the dimension was fixed as $p = 200$}
	
	%	\spacingset{1.2}	
	%\small\addtolength{\tabcolsep}{10pt}  % shrinks the size of the table
	\resizebox{\textwidth}{!}{
		\begin{tabular}{lcc|*{7}{c}|*{7}{c}}

			\hline
			\hline
r & Model & $\delta$ &  \multicolumn{7}{c|}{$p = 200$} & \multicolumn{7}{c}{$p = 200$} \\
& && \DNA{PREPR} & BS& SD & CQ & GM& GL&CLX& \DNA{PREPR} &BS& SD & CQ & GM& GL&CLX\\
\hline
\hline
&&& \multicolumn{7}{c|}{$n_{1}=n_{2}=35$} &  \multicolumn{7}{c}{$n_{1}=60\;\; n_{2}=40$} \\
\hline
\hline
\multirow{3}{*}{$0.5\%$}  & M1 & \multirow{3}{*}{0.6}  &0.050 & 0.077& 0.060& 0.077& 0.157& 0.178& 0.088&0.083& 0.079& 0.059& 0.078& 0.135& 0.153& 0.127\\
& M2& & 0.138& 0.0987& 0.085& 0.104& 0.058& 0.052& 0.225&0.264& 0.131& 0.127& 0.138& 0.071& 0.0507& 0.335\\
 &M3& &0.124& 0.095& 0.082& 0.100& 0.058& 0.064& 0.199&0.250& 0.135& 0.108& 0.138& 0.056& 0.055& 0.326\\
\hline
\multirow{3}{*}{$0.5\%$}  & M1  &   \multirow{3}{*}{0.9}  &0.161& 0.091& 0.063& 0.091& 0.122& 0.136& 0.231&0.276& 0.104& 0.075& 0.105& 0.115& 0.118& 0.346\\
  & M2  &  &0.498& 0.240& 0.207& 0.251& 0.065& 0.0373& 0.595&0.700& 0.290& 0.263& 0.293&0.0507& 0.0287& 0.782\\
 &M3& &0.473& 0.193& 0.177& 0.200& 0.056& 0.033& 0.580& 0.692& 0.267& 0.234& 0.269& 0.044& 0.027& 0.769\\
\hline
\hline
&&& \multicolumn{7}{c|}{$n_{1}=n_{2}=35$} & \multicolumn{7}{c}{$n_{1}=60\;\; n_{2}=40$} \\
\hline
\hline
\multirow{3}{*}{$1\%$}&M1 & \multirow{3}{*}{0.6} & 0.213& 0.159& 0.111& 0.160& 0.104& 0.103& 0.290&0.107& 0.123& 0.086& 0.125& 0.117& 0.126& 0.160\\
&M2 & & 0.192& 0.251& 0.222& 0.263& 0.0767& 0.037& 0.300&0.432& 0.375& 0.369& 0.383& 0.085& 0.037& 0.534\\
&M3&& 0.186& 0.217& 0.179& 0.219& 0.060& 0.041& 0.273&0.382& 0.308& 0.267& 0.312& 0.057& 0.026& 0.465\\ 
\hline
\multirow{3}{*}{$1\%$} & M1 & \multirow{3}{*}{0.9} & 0.136 &0.098& 0.059& 0.098& 0.077& 0.064& 0.236&0.361& 0.192& 0.151& 0.195& 0.089& 0.081& 0.440\\
& M2 &  & 0.674&0.611& 0.591& 0.619& 0.043& 0.0127& 0.787&0.890& 0.816& 0.800& 0.817& 0.024& 0.0087& 0.934\\
&M3& & 0.624& 0.537& 0.513& 0.545& 0.044& 0.014& 0.753& 0.869& 0.757& 0.728& 0.756& 0.025& 0.011& 0.912\\
\hline
\hline

&&& \multicolumn{7}{c|}{$n_{1}=n_{2}=35$} & \multicolumn{7}{c}{$n_{1}=60\;\; n_{2}=40$} \\ 
\hline
\hline
\multirow{3}{*}{$3\%$}  &  M1 & \multirow{3}{*}{0.6}  & 0.107& 0.182& 0.145& 0.184& 0.139& 0.134& 0.171& 0.211& 0.242& 0.187& 0.241& 0.123& 0.110& 0.274\\
&  M2 & & 0.384& 0.710& 0.696& 0.715& 0.146& 0.056& 0.565&0.690&0.893& 0.879& 0.892& 0.147& 0.067& 0.787\\
&M3&&0.371& 0.635& 0.607& 0.639& 0.124& 0.054& 0.531&0.666& 0.829& 0.811& 0.829& 0.139& 0.078& 0.762\\
\hline
\multirow{3}{*}{$3\%$}  &  M1 & \multirow{3}{*}{0.9}  & 0.327& 0.388& 0.320& 0.390& 0.122& 0.090& 0.443&0.545& 0.557& 0.488& 0.557& 0.146& 0.123& 0.634\\
&  M2 & & 0.228& 0.403& 0.313& 0.404& 0.169& 0.053&0.398&0.299&0.711&0.553&0.711&0.290&0.095&0.555\\
&M3&&0.922& 0.993& 0.995& 0.994& 0.115& 0.039& 0.968& 0.993&1.000&1.000&1.000&0.061& 0.033& 0.998\\
\hline
\end{tabular}
	}
\label {table 5}
\end{table}   

\begin{table}[!htbp]
	\centering
	\caption{Empirical power of the test statistics at $5\%$ significance level. Data was generated under Scenario 1 within simulation setup 1 and the dimension was fixed as $p = 1000$ and $p = 3000$ respectively.}
	
	%	\spacingset{1.2}	
	%\small\addtolength{\tabcolsep}{10pt}  % shrinks the size of the table
	\resizebox{\textwidth}{!}{
		\begin{tabular}{lcc|*{7}{c}|*{7}{c}}

			\hline
			\hline
r & Model & $\delta$ &  \multicolumn{7}{c|}{$p = 1000$} & \multicolumn{7}{c}{$p = 3000$} \\
& && \DNA{PREPR} & BS& SD & CQ & GM& GL&CLX& \DNA{PREPR}    &BS& SD & CQ & GM& GL&CLX\\
\hline
\hline
&&& \multicolumn{7}{c|}{$n_{1}=n_{2}=35$} &  \multicolumn{7}{c}{$n_{1}=n_{2}=35$} \\
\hline
\hline
\multirow{3}{*}{$0.5\%$}  & M1 & \multirow{3}{*}{0.6}  &0.059 &0.067& 0.039& 0.067& 0.074& 0.078& 0.126&0.061&0.062&0.025&0.063&0.074&0.044&0.166\\
  & M2& & 0.095& 0.096& 0.064& 0.097& 0.121& 0.051&0.202&0.114&0.112&0.051&0.112&0.235&0.046&0.258\\
 &M3& &0.104& 0.081& 0.047& 0.081& 0.083& 0.047& 0.209&0.106&0.109& 0.049& 0.108& 0.194&0.041& 0.264 \\

\hline
\multirow{3}{*}{$0.5\%$}  & M1  &   \multirow{3}{*}{0.9}  &0.105 & 0.085& 0.047& 0.085& 0.081& 0.078& 0.2079&0.116&0.086&0.037&0.086&0.095&0.057&0.252\\
  & M2  &  &0.113 & 0.079& 0.049& 0.078& 0.081& 0.077&0.215 &0.391&0.244&0.125&0.245&0.245&0.037&0.670\\
 &M3& & 0.449& 0.255& 0.180& 0.256& 0.094& 0.032& 0.640&0.401& 0.227& 0.109& 0.227& 0.210& 0.040& 0.655\\
\hline
\hline
&&& \multicolumn{7}{c|}{$n_{1}=n_{2}=35$} & \multicolumn{7}{c}{$n_{1}=n_{2}=35$} \\
\hline
\hline

\multirow{3}{*}{$1\%$}&M1 & \multirow{3}{*}{0.6} & 0.060& 0.074& 0.041& 0.073& 0.082& 0.083&0.128&0.072&0.085&0.036&0.084&0.095&0.060&0.179\\
&M2 & & 0.120& 0.136& 0.090& 0.137& 0.131& 0.053&0.247&0.152&0.204&0.105&0.205&0.267&0.041&0.339\\
&M3&& 0.124&0.117& 0.074& 0.116& 0.094& 0.042& 0.230&0.163&0.188&0.086&0.188&0.219&0.05&0.342\\ 
\hline
\multirow{3}{*}{$1\%$} & M1 & \multirow{3}{*}{0.9} & 0.136 &0.098& 0.059& 0.098& 0.077& 0.064& 0.236&0.159 & 0.135&0.065&0.134&0.111&0.050&0.303\\
& M2 &  & 0.463& 0.308& 0.228& 0.308& 0.119& 0.036& 0.670&0.574 & 0.534&0.375&0.536&0.193&0.027&0.838\\
&M3& & 0.447& 0.266& 0.194& 0.265& 0.097& 0.034& 0.624&0.562&0.457&0.293&0.459&0.169&0.029&0.809\\
\hline
\hline

&&& \multicolumn{7}{c|}{$n_{1}=n_{2}=35$} & \multicolumn{7}{c}{$n_{1}=n_{2}=35$} \\ 
\hline
\hline
\multirow{3}{*}{$3\%$}  &  M1 & \multirow{3}{*}{0.6}  & 0.082  & 0.106& 0.065& 0.106& 0.080& 0.065& 0.168&0.092 &0.156&0.071&0.157&0.134&0.050&0.210\\
&  M2 & & 0.228& 0.403& 0.313& 0.404& 0.169& 0.053&0.398&0.299&0.711&0.553&0.711&0.290&0.095&0.555\\
&M3&& 0.215& 0.334& 0.245& 0.335& 0.139& 0.0545& 0.392&0.323&0.633&0.462&0.634&0.361&0.085&0.546\\

\hline
\multirow{3}{*}{$3\%$}  & M1 & \multirow{3}{*}{0.9}  & 0.225  & 0.215& 0.144& 0.214& 0.088& 0.056&0.357&0.289 &0.378&0.224&0.380&0.184&0.052&0.503 \\
& M2 & & 0.769& 0.880& 0.821& 0.882& 0.084&0.023&0.917 &0.893&0.997&0.991&0.997&0.248&0.046&0.990\\
&M3& & 0.746& 0.809& 0.732& 0.809& 0.095& 0.032& 0.887& 0.884& 0.996& 0.980& 0.996& 0.240& 0.057& 0.988\\
\hline
\hline
&&& \multicolumn{7}{c|}{$n_{1}=60\;\;n_{2}=40$} &  \multicolumn{7}{c}{$n_{1}=60\;\;n_{2}=40$} \\
\hline
\hline
\multirow{3}{*}{$0.5\%$}&M1 & \multirow{3}{*}{0.6}& 0.062 &0.075& 0.051& 0.076& 0.080& 0.077&0.117&0.075&0.079&0.048&0.076&0.079&0.076&0.137\\
 &M2& & 0.167& 0.111& 0.084& 0.113& 0.100& 0.055& 0.271&0.190&0.149&0.090&0.150&0.195&0.052&0.347\\
 &M3& & 0.141& 0.103& 0.071& 0.102& 0.087& 0.049& 0.245&0.171& 0.140& 0.078& 0.140& 0.163& 0.053& 0.318\\  
\hline
\multirow{3}{*}{$0.5\%$}&M1&\multirow{3}{*}{0.9}& 0.180 & 0.088& 0.058& 0.087& 0.072& 0.070& 0.281&0.184 &0.101&0.052&0.100&0.081&0.055&0.316\\
&M2&& 0.536 & 0.229& 0.176& 0.227& 0.091& 0.040&0.754&0.639 &0.374&0.251&0.373&0.160&0.036&0.885\\
&M3& &0.689&0.389& 0.317& 0.390& 0.086& 0.033& 0.839&0.630& 0.310& 0.193& 0.307&0.142&0.046& 0.853\\
\hline
\hline
&&& \multicolumn{7}{c|}{$n_{1}=60\;\;n_{2}=40$} &  \multicolumn{7}{c}{$n_{1}=60\;\;n_{2}=40$} \\ 
\hline
\hline

\multirow{3}{*}{$1\%$}&M1 & \multirow{3}{*}{0.6}& 0.075  & 0.082& 0.052& 0.083& 0.077& 0.076& 0.132&0.089&0.096&0.054&0.096&0.089&0.059&0.175\\
 &M2& & 0.224& 0.189& 0.144& 0.188& 0.114& 0.050&0.347&0.254&0.166&0.101&0.167&0.098&0.048&0.406\\
&M3&& 0.196& 0.164& 0.122& 0.162& 0.100& 0.047& 0.310&0.274&0.250&0.160&0.249&0.175&0.049&0.431 

\\
\hline
\multirow{3}{*}{$1\%$}&M1 & \multirow{3}{*}{0.9}& 0.224  & 0.117& 0.079& 0.114& 0.069& 0.059&0.323&0.254 &0.166&0.101&0.167&0.099&0.048&0.406\\
&M2& & 0.709& 0.447& 0.371& 0.448&  0.068& 0.026&0.875 &0.840  &0.767&0.660&0.768&0.0882&0.016&0.976\\
&M3& &0.678& 0.386& 0.325& 0.386& 0.072& 0.034& 0.837&0.833&0.681&0.549&0.675&0.091&0.025&0.961\\
\hline
\hline

&&&  \multicolumn{7}{c|}{ $n_{1}=60\;\;n_{2}=40$} & \multicolumn{7}{c}{$n_{1}=60\;\;n_{2}=40$}\\ 
\hline
\hline
\multirow{3}{*}{$3\%$}&M1 & \multirow{3}{*}{0.6} & 0.101  & 0.134& 0.092& 0.134& 0.073& 0.059& 0.182&0.137 &0.211&0.130&0.212&0.130&0.063&0.249\\
&M2& & 0.422& 0.582& 0.502& 0.581& 0.116& 0.050& 0.596&0.534 &0.889&0.817&0.889&0.407&0.089&0.757\\
&M3& & 0.376& 0.493& 0.423& 0.495& 0.117& 0.057&0.531&0.522& 0.865& 0.761& 0.862& 0.279& 0.094& 0.747\\
\hline
\multirow{3}{*}{$3\%$}&M1 & \multirow{3}{*}{0.9} &0.382&0.303&0.229&0.303&0.086&0.055& 0.507&0.893&0.997& 0.991& 0.997& 0.248& 0.046&0.990\\
 &M2& &0.946&0.981&0.967&0.980&0.0304&0.011&0.989&0.991&1.000&1.000&1.000&0.109&0.025&1.000\\
 &M3&& 0.943& 0.964& 0.943& 0.963& 0.038& 0.016& 0.984&0.990& 1.000& 1.000& 1.000& 0.124& 0.030& 1.000\\
\hline
\end{tabular}
	}
\label {table 6}
\end{table}

\begin{table}[!htbp]
	\centering
	\caption{Empirical power of the test statistics at $5\%$ significance level. Data was generated under Scenario 2 within simulation setup 1 and the dimension was fixed as $p = 1000$ and $p = 3000$ respectively.}
	
	%	\spacingset{1.2}	
	%\small\addtolength{\tabcolsep}{10pt}  % shrinks the size of the table
	\resizebox{\textwidth}{!}{
		\begin{tabular}{lcc|*{7}{c}|*{7}{c}}

			\hline
			\hline
r & Model & $\delta$ &  \multicolumn{7}{c|}{$p = 1000$} & \multicolumn{7}{c}{$p = 3000$} \\
& && \DNA{PREPR} & BS& SD & CQ & GM& GL&CLX& \DNA{PREPR} &BS& SD & CQ & GM& GL&CLX\\
\hline
\hline
&&& \multicolumn{7}{c|}{$n_{1}=n_{2}=35$} &  \multicolumn{7}{c}{$n_{1}=n_{2}=35$} \\
\hline
\hline
\multirow{3}{*}{$0.5\%$} &M1 & \multirow{3}{*}{0.6}& 0.058&0.060& 0.034& 0.061& 0.073& 0.095& 0.111&0.068&0.073& 0.031 &0.074& 0.089& 0.087&0.132\\
 &M2& & 0.080& 0.086& 0.054& 0.091& 0.113& 0.100&0.185& 0.077 &0.109& 0.046& 0.114& 0.241& 0.191&0.207\\
 &M3& & 0.078& 0.086& 0.051& 0.089& 0.097& 0.074& 0.177&0.089&0.097& 0.041& 0.099& 0.183& 0.094& 0.228 \\
  
\hline
\multirow{3}{*}{$0.5\%$} & M1 & \multirow{3}{*}{0.9} & 0.105 & 0.080& 0.044& 0.082& 0.082& 0.080& 0.209&0.109&0.100& 0.040& 0.101& 0.096& 0.061&0.238\\
 & M2 & & 0.325 & 0.156& 0.107& 0.164& 0.112& 0.042& 0.556&0.376 &0.233& 0.133& 0.244& 0.227& 0.047&0.688\\
 & M3& & 0.315& 0.139& 0.099& 0.147& 0.101& 0.035& 0.524&0.375&0.208& 0.107& 0.212& 0.203& 0.024& 0.652\\ 
\hline
\hline
&&& \multicolumn{7}{c}{$n_{1}=n_{2}=35$} & \multicolumn{7}{c}{$n_{1}=n_{2}=35$} \\
\hline
\hline

\multirow{3}{*}{$1\%$} & M1 & \multirow{3}{*}{0.6} & 0.059& 0.078& 0.045& 0.081& 0.079& 0.086&0.117&0.062&0.080&0.030&0.080&0.092&0.056&10.141\\
 &M2& & 0.108& 0.129& 0.088& 0.134& 0.123& 0.066&0.227 &0.140&0.212&0.115&0.217&0.287&0.067&0.328\\
 &M3& &0.111& 0.111& 0.075& 0.115& 0.096& 0.050& 0.229&0.134& 0.158& 0.072& 0.164& 0.202& 0.035& 0.309\\
\hline
\multirow{3}{*}{$1\%$}&M1 & \multirow{3}{*}{0.9}& 0.133 &0.100& 0.059& 0.101& 0.081& 0.073&0.243& 0.146 &0.123& 0.059& 0.123& 0.112& 0.050 &0.305\\
 &M2& & 0.448&0.294& 0.234& 0.305& 0.115& 0.013&0.698 & 0.574&0.530& 0.370& 0.543& 0.157& 0.035&0.876\\
 &M3& &0.446& 0.261& 0.192& 0.270& 0.114& 0.017&0.652&0.540& 0.455& 0.308& 0.461& 0.186& 0.004& 0.820\\
\hline
\hline
&&& \multicolumn{7}{c}{$n_{1}=n_{2}=35$} & \multicolumn{7}{c}{$n_{1}=n_{2}=35$} \\
\hline
\hline
\multirow{3}{*}{$3\%$}&M1 & \multirow{3}{*}{0.6} & 0.079  & 0.110& 0.063& 0.112& 0.078& 0.065&0.158& 0.287& 0.389& 0.249& 0.392& 0.194& 0.056&0.509\\
 &M2& & 0.221  &0.394 & 0.324& 0.405& 0.167& 0.014& 0.424 &0.279  &0.700&0.559&0.710&0.226&0.055&0.584\\
 &M3& & 0.208& 0.308& 0.239& 0.312& 0.134&0.018& 0.385&0.283& 0.621& 0.473& 0.630& 0.359& 0.015& 0.567\\
\hline
\multirow{3}{*}{$3\%$}&M1 & \multirow{3}{*}{0.9} & 0.219  & 0.215& 0.148& 0.217& 0.094& 0.054&0.349 &0.070& 0.151&0.070&0.156&0.133&0.056&0.183\\
 &M2& & 0.765& 0.875& 0.840& 0.880& 0.074& 0.058&0.933 &0.884&0.996&0.992&0.996&0.381&0.047&0.994 \\
 &M3& &0.764& 0.801& 0.753& 0.809& 0.099& 0.009& 0.917&0.888& 0.996& 0.990& 0.996& 0.251& 0.005& 0.994\\
\hline
\hline
&&& \multicolumn{7}{c}{$n_{1}=60\;\;n_{2}=40$} & \multicolumn{7}{c}{$n_{1}=60\;\;n_{2}=40$} \\
\hline
\hline
\multirow{3}{*}{$0.5\%$} &M1 & \multirow{3}{*}{0.6} & 0.064 &0.067& 0.042& 0.066& 0.077& 0.084&0.121&0.088&0.071&0.044&0.071&0.079&0.066&10.46\\
 &M2& & 0.191& 0.114& 0.082& 0.118&0.116& 0.05&0.3147&0.233&0.146&0.082&0.153&0.232&0.056&0.415\\
 &M3& &0.200& 0.087& 0.063& 0.090& 0.075& 0.044& 0.311& 0.206&0.119&0.063&0.120& 0.172& 0.041& 0.376\\
  
\hline
\multirow{3}{*}{$0.5\%$} &M1& \multirow{3}{*}{0.9} & 0.184 & 0.087& 0.058& 0.090& 0.071& 0.069&0.291&0.202&0.095&0.049&0.093&0.079&0.051&0.334\\
 &M2& & 0.543 & 0.232& 0.190& 0.237& 0.112& 0.021&0.787&0.662&0.361&0.258&0.369&0.176&0.033&0.913\\
 &M3& &0.542& 0.214& 0.173& 0.214& 0.102& 0.024& 0.744&0.614& 0.299& 0.200& 0.301& 0.163& 0.009& 0.869\\

\hline
\hline
&&& \multicolumn{7}{c}{$n_{1}=60\;\;n_{2}=40$} & \multicolumn{7}{c}{$n_{1}=60\;\;n_{2}=40$}  \\ 
\hline
\hline

\multirow{3}{*}{$1\%$} &M1 & \multirow{3}{*}{0.6} & 0.082  & 0.079& 0.053& 0.078& 0.079& 0.084&0.144&0.103&0.095&0.047&0.094&0.085&0.053&0.188\\
 &M2& & 0.253& 0.174& 0.134& 0.181& 0.120& 0.030&0.409 &0.331&0.280&0.185&0.290&0.223&0.011&0.568\\
 &M3& &0.235& 0.145& 0.110& 0.150& 0.091& 0.031& 0.363&0.318& 0.249& 0.172& 0.253& 0.212& 0.013& 0.517\\

\hline
\multirow{3}{*}{$1\%$} &M1 & \multirow{3}{*}{0.9}& 0.256  & 0.117& 0.082& 0.118& 0.073& 0.065 &0.360&0.281&0.169&0.098&0.166&0.106&0.044&0.434\\
 &M2& & 0.708& 0.450& 0.400& 0.457&  0.079& 0.010&0.889&0.983&0.774&0.689&0.781&0.106&0.027&0.983\\
 &M3& & 0.693& 0.387& 0.326& 0.393& 0.065& 0.013& 0.870&0.836& 0.677& 0.567& 0.683& 0.110& 0.004& 0.971\\
\hline
\hline
&&& \multicolumn{7}{c}{$n_{1}=60\;\;n_{2}=40$} & \multicolumn{7}{c}{$n_{1}=60\;\;n_{2}=40$}\\ 
\hline
\hline
\multirow{3}{*}{$3\%$} &M1 & \multirow{3}{*}{0.6}& 0.131  & 0.139& 0.098& 0.139& 0.074& 0.058 &0.217&0.161&0.205&0.120&0.207&0.126&0.052&0.290\\
 &M2& & 0.502& 0.557& 0.504& 0.562& 0.129& 0.013&0.685 & 0.665&0.894&0.840&0.893&0.329&0.046&0.870\\
 &M3& &0.478& 0.476& 0.407& 0.480& 0.116& 0.016& 0.640&0.632& 0.847& 0.763& 0.848& 0.299&0.033& 0.854\\
\hline
\multirow{3}{*}{$3\%$} &M1 & \multirow{3}{*}{0.9} &0.418&0.303&0.230&0.306&0.087&0.0485 &0.545&0.576&0.570&0.443&0.571&0.186&0.065&0.736 \\
 &M2&  &0.950&0.977&0.971&0.978&0.030&0.011&0.994 &0.994&1.000&1.000&1.000&0.122&0.060&1.000\\
 &M3&  &  0.951& 0.955& 0.934& 0.955& 0.043& 0.010& 0.995&0.997& 1.000& 1.000& 1.000&0.132& 0.008& 1.000\\
\hline
\end{tabular}
	}
\label {table 7}
\end{table}   
\subsection{Simulation setup 2} 
In Simulation setup 2, we use the same setup as in Simulation setup 1, except that we now use covariance matrices with unequal variances are described in \textbf{(M4)} and \textbf{(M5)}.
\begin{itemize}
\item[]\textbf{(M4)} Model 4 considers $\Sigma_{x}=\Sigma_{y}=\text{D}^{1/2}\Sigma^{*}\text{D}^{1/2}$, where D is a diagonal matrix with with diagonal elements $d_{ii}$=Unif(1,5), $\Sigma^{*}=(\sigma^{*}_{ij})$, $\sigma^{*}_{ii}=1$, and $\sigma^{*}_{ij}=|i-j|^{-5/2}$ if $i\neq j.$  Such a
covariance matrix was also used in \cite{3}. \item[] \textbf{(M5)} In Model 5, $\Sigma_{x}=\Sigma_{y}=\text{D}^{1/2}\Sigma^{*}\text{D}^{1/2}$, where D is a diagonal matrix with  diagonal elements $d_{ii}$=Unif(1,5), and $\Sigma^{*}$ has a long-range dependence structure with diagonal that are all ones. $\Sigma^{*}$ has the similar  structure as the the \textbf{(M2)} in Simulation setup 1. \DNA{describe the structure}
\end{itemize}

\begin{table}[!htbp]
	\centering
	\caption{Empirical type I errors of the test statistics at $5\%$ significance level. Data was generated under simulation setup 2 and models (M4/M5) and the two scenarios. Dimension was fixed at $p=200$}
	%	\spacingset{1.2}	
	%\small\addtolength{\tabcolsep}{10pt}  % shrinks the size of the table
	\begin{adjustbox}{width=0.6\textwidth}
		\begin{tabular}{l|cccccccc}
			%
			% ^.
			\hline
			\hline
			 & \multicolumn{7}{c}{Scenario 1} \\
			
			Model &\DNA{PREPR} & BS& SD & CQ & GM& GL& CLX& \\
			\hline
			\hline
			 & \multicolumn{7}{c}{$n_1=n_2=35$} \\
			\hline
			\hline
			M4& 0.048& 0.054& 0.046& 0.054& 0.065& 0.074& 0.089\\
			M5& 0.033& 0.076& 0.030& 0.074& 0.072& 0.073& 0.073\\

			\hline
			& \multicolumn{7}{c}{ $n_1=60,\;\; n_2=40$} \\
			\hline
		    M4& 0.044& 0.060& 0.050& 0.063& 0.065& 0.072& 0.072\\
			M5  & 0.040& 0.069& 0.025& 0.069& 0.059& 0.069& 0.068\\
			
			\hline\hline
			 & \multicolumn{7}{c}{Scenario 2} \\

			Model &\DNA{PREPR}& BS& SD & CQ & GM& GL&CLX& \\
			\hline\hline
			 & \multicolumn{7}{c}{$n_1=n_2=35$} \\
			\hline
			\hline
			M4& 0.039& 0.056& 0.040& 0.056& 0.061&0.099& 0.065\\
			M5  &0.032& 0.070& 0.029& 0.070& 0.071& 0.116& 0.062\\
			\hline
			& \multicolumn{7}{c}{$n_1=60,\;\; n_2=40$} \\
			\hline
		    M4 & 0.041& 0.065& 0.049& 0.067& 0.066& 0.097& 0.077\\
			M5 & 0.039& 0.078& 0.024& 0.078& 0.070& 0.106& 0.066\\
			\hline
		\end{tabular}
	\end{adjustbox}
\label {table 8}	
\end{table}

\begin{table}[!htbp]
	\centering
	\caption{Empirical type I errors of the test statistics at $5\%$ significance level. Data was generated under simulation setup 2 and models (M4/M5) and the two scenarios. Dimension was fixed at $p=1000$.}
	%	\spacingset{1.2}	
	%\small\addtolength{\tabcolsep}{10pt}  % shrinks the size of the table
	\begin{adjustbox}{width=0.6\textwidth}
		\begin{tabular}{l|cccccccc}
			%
			% ^.
			\hline
			\hline
			 & \multicolumn{7}{c}{Scenario 1} \\
			
			Model &\DNA{PREPR} & BS& SD & CQ & GM& GL& CLX& \\
			\hline
			\hline
			 & \multicolumn{7}{c}{$n_1=n_2=35$} \\
			\hline
			\hline
			M4&0.055 &0.058& 0.034& 0.058& 0.078& 0.059& 0.115\\
			M5& 0.044& 0.051& 0.032& 0.052& 0.087& 0.059& 0.112\\

			\hline
			& \multicolumn{7}{c}{ $n_1=60,\;\; n_2=40$} \\
			\hline
		    M4& 0.045& 0.054& 0.038& 0.054& 0.086& 0.088& 0.087\\
			M5  & 0.047& 0.051& 0.037& 0.048& 0.071& 0.057& 0.094\\
			
			\hline\hline
			 & \multicolumn{7}{c}{Scenario 2} \\

			Model & \DNA{PREPR} & BS& SD & CQ & GM& GL&CLX& \\
			\hline\hline
			 & \multicolumn{7}{c}{$n_1=n_2=35$} \\
			\hline
			\hline
			M4& 0.030& 0.066& 0.031& 0.066& 0.087& 0.110& 0.089\\
			M5  &0.034& 0.047& 0.032& 0.049& 0.085& 0.188& 0.074\\
			\hline
			& \multicolumn{7}{c}{$n_1=60,\;\; n_2=40$} \\
			\hline
		    M4 & 0.039& 0.060& 0.031& 0.057& 0.080& 0.095& 0.072\\
			M5 & 0.040& 0.059& 0.035& 0.058& 0.0887& 0.107& 0.101\\
			
			\hline

		\end{tabular}
	\end{adjustbox}
\label {table 9}	
\end{table}

\begin{table}[!htbp]
	\centering
	\caption{Empirical powers of the test statistics at $5\%$ significance level. Data was generated under simulation setup 2 for scenario 1 and two models (M4/M5). Dimension was fixed at $p = 200$ and $p = 1000$.}
	
	%	\spacingset{1.2}	
	%\small\addtolength{\tabcolsep}{10pt}  % shrinks the size of the table
	\resizebox{\textwidth}{!}{
		\begin{tabular}{lcc|*{7}{c}|*{7}{c}}

			\hline
			\hline
r & Model & $\delta$ &  \multicolumn{7}{c|}{$p = 200$} & \multicolumn{7}{c}{$p = 1000$} \\
& && \DNA{PREPR} & BS& SD & CQ & GM& GL&CLX& \DNA{PREPR} &BS& SD & CQ & GM& GL&CLX\\
\hline
\hline
&&& \multicolumn{7}{c|}{$n_{1}=n_{2}=35$} &  \multicolumn{7}{c}{$n_{1}=n_{2}=35$} \\
\hline
\hline
\multirow{3}{*}{$0.2$}  & M4 & \multirow{3}{*}{1.5}  &0.280& 0.165& 0.126& 0.165& 0.052& 0.041& 0.364&0.301 & 0.107& 0.067& 0.106& 0.098& 0.051& 0.479\\
  & M5& & 0.263& 0.076& 0.023& 0.076& 0.582& 0.606& 0.338&0.276& 0.074& 0.016& 0.074& 0.862& 0.874& 0.451\\

\hline
\multirow{3}{*}{$0.4$}  & M4  &   \multirow{3}{*}{1.5}  &0.751& 0.464& 0.487& 0.464& 0.042& 0.024& 0.830&0.694& 0.341& 0.337& 0.341& 0.088& 0.028& 0.934\\
  & M5  &  &0.727& 0.102& 0.032& 0.102& 0.256& 0.254& 0.798&0.653 &0.082&0.013& 0.082& 0.690& 0.718& 0.908\\
 
\hline

\multirow{3}{*}{$0.6$}&M4 & \multirow{3}{*}{1.5} & 0.909& 0.968& 0.973& 0.968& 0.116& 0.076& 0.956&0.992 &0.996& 1.000& 0.996& 0.158& 0.074& 1.000\\
&M5 & & 0.831& 0.213& 0.085& 0.213& 0.253& 0.239& 0.892&0.944& 0.109& 0.031& 0.109& 0.284& 0.268& 0.989\\

\hline
\hline
&&& \multicolumn{7}{c|}{$n_{1}=60\;\;n_{2}=40$} &  \multicolumn{7}{c}{$n_{1}=60\;\;n_{2}=40$} \\
\hline
\hline
\multirow{3}{*}{$0.5\%$}&M4 & \multirow{3}{*}{1.5}&0.480& 0.226& 0.173& 0.226& 0.050& 0.038& 0.555&0.449& 0.111& 0.080& 0.112& 0.070& 0.036& 0.710\\
 &M5& & 0.441& 0.088& 0.035& 0.090& 0.518& 0.546& 0.527&0.447& 0.081& 0.016& 0.080& 0.864& 0.870& 0.701\\
 
\hline
\multirow{3}{*}{$1\%$}&M4&\multirow{3}{*}{1.5}&0.923& 0.651& 0.701& 0.650& 0.023& 0.017& 0.954&0.840& 0.531 &0.559& 0.529& 0.053& 0.021& 0.989\\
&M5&& 0.927& 0.121& 0.045& 0.121& 0.213& 0.205& 0.954&0.800& 0.092& 0.015& 0.091& 0.552& 0.590& 0.984\\

\hline
\multirow{3}{*}{$03\%$}&M4 & \multirow{3}{*}{1.5}&0.988& 0.998& 0.999& 0.998& 0.068& 0.051& 0.995& 1.000& 1.000& 1.000& 1.000 &0.076& 0.040& 1.000\\
 &M5& & 0.956& 0.246& 0.095& 0.251& 0.226& 0.212& 0.973&0.992 &0.131& 0.030& 0.135& 0.265& 0.258& 0.999\\
\hline
\end{tabular}
	}
\label {table 10}
\end{table}

\begin{table}[!htbp]
	\centering
	\caption{Empirical powers of the test statistics at $5\%$ significance level. Data was generated under simulation setup 2 for scenario 2 and two models (M4/M5). Dimension was fixed at $p = 200$ and $p = 1000$.}
	
	%	\spacingset{1.2}	
	%\small\addtolength{\tabcolsep}{10pt}  % shrinks the size of the table
	\resizebox{\textwidth}{!}{
		\begin{tabular}{lcc|*{7}{c}|*{7}{c}}

			\hline
			\hline
r & Model & $\delta$ &  \multicolumn{7}{c|}{$p = 200$} & \multicolumn{7}{c}{$p = 1000$} \\
& && \DNA{PREPR} & BS& SD & CQ & GM& GL&CLX& \DNA{PREPR} &BS& SD & CQ & GM& GL&CLX\\
\hline
\hline
&&& \multicolumn{7}{c|}{$n_{1}=n_{2}=35$} &  \multicolumn{7}{c}{$n_{1}=n_{2}=35$} \\
\hline
\hline
\multirow{3}{*}{$0.5\%$}  & M4 & \multirow{3}{*}{1.5}  &0.252& 0.166& 0.136& 0.173& 0.059& 0.042& 0.3615&0.291& 0.107& 0.077& 0.113& 0.108& 0.050& 0.508\\
  & M5& & 0.250& 0.085& 0.030& 0.084& 0.615& 0.650& 0.343&0.288& 0.075& 0.013& 0.076& 0.875& 0.882& 0.472\\

\hline
\multirow{3}{*}{$1\%$}  & M4  &   \multirow{3}{*}{1.5}  &0.753 &0.451& 0.506& 0.460& 0.040& 0.019& 0.841&0.645& 0.340& 0.354& 0.351& 0.090& 0.009& 0.934\\
  & M5  &  &0.717& 0.109& 0.045& 0.109& 0.241& 0.236& 0.804& 0.638& 0.070& 0.012& 0.070 &0.676& 0.726& 0.901\\

\hline

\multirow{3}{*}{$3\%$}&M4 & \multirow{3}{*}{1.5} & 0.918& 0.961& 0.977& 0.963& 0.089& 0.044& 0.966&0.994& 0.996& 1.000& 0.996& 0.146& 0.020& 1.000\\
&M5 & &0.825& 0.194 &0.074& 0.194& 0.246 &0.225 &0.893
&0.949& 0.125& 0.02758& 0.125& 0.299& 0.272& 0.988\\

\hline
\hline
&&& \multicolumn{7}{c|}{$n_{1}=60\;\;n_{2}=40$} &  \multicolumn{7}{c}{$n_{1}=60\;\;n_{2}=40$} \\
\hline
\hline
\multirow{3}{*}{$0.5\%$}&M4 & \multirow{3}{*}{1.5}&0.457& 0.223& 0.186& 0.235& 0.048& 0.030& 0.551&0.440& 0.140& 0.119& 0.142& 0.117& 0.035& 0.700\\
 &M5& & 0.467& 0.083& 0.028& 0.083& 0.519& 0.549& 0.543&0.430& 0.073& 0.013& 0.075& 0.868& 0.872& 0.680\\

\hline
\multirow{3}{*}{$1\%$}&M4&\multirow{3}{*}{1.5}&0.925& 0.661& 0.715& 0.667& 0.020& 0.011& 0.955&0.864& 0.516& 0.567& 0.520& 0.055& 0.011& 0.984\\
&M5&& 0.895& 0.106& 0.041& 0.111& 0.210& 0.199& 0.936&0.793& 0.072& 0.014& 0.071& 0.542& 0.591& 0.977\\

\hline
\multirow{3}{*}{$3\%$}&M4 & \multirow{3}{*}{1.5}&0.991& 0.997& 0.999& 0.997& 0.063& 0.034& 0.998& 1.000& 1.000& 1.000& 1.000 &0.068& 0.015& 1.000\\
 &M5& & 0.958& 0.245& 0.093& 0.255& 0.219& 0.208& 0.977&0.995& 0.129& 0.029& 0.132& 0.256& 0.238& 0.999\\

\hline
\end{tabular}
	}
\label {table 11}
\end{table}  

\subsection{Discussion the results from the simulation }
Results from Simulation setup 1 presented in Tables \ref{table 1}-\ref{table 3} display the empirical type I errors of the new test \textit{PREPR} \DNA{along} with the tests BS, CQ, GL, GM, and CLX.  These tables show that  empirical type I errors of new test \textit{PREPR}  are close the nominal size 0.05 under the all scenarios and all the models, and  they range from 0.032 to 0.051. The type I errors of tests BS and CQ are comparable. In some cases, they are  liberal with values ranging  from 0.041--0.071 and 0.043--0.070 respectively. Test SD seems to be conservative with increase $p$, and its empirical type-I errors vary from 0.012 to 0.05. Whereas the tests CLX, GL and GM are generally liberal under all scenarios and models with their empirical type I errors ranging from 0.056--0.142, 0.059--0.191 and 0.051--0.165 respectively. \DNA{Overall}, in all of the considered scenarios and models under the simulation setup 1, the proposed test \textit{PREPR} outperforms BS, CQ, SD, GM, GL and CLX in terms of size control as indicated by its better type I error accuracy.

Empirical type I errors corresponding to the Simulation setup 2 are presented in Tables \ref{table 8}-\ref{table 9}. These tables show that type I errors of these tests are consistent with their  empirical type I errors reported in Simulation setup 1 %(\DNA{What does this mean? Could you provide some explanation  - if this means the test is robust to unequal variances or something like that}). 
For example,  the type I errors of \textit{PREPR} range from 0.034--0.055, whereas those of BS,CQ, SD, GM, GL, and CLX range from 0.047--0.078, 0.048--0.078, 0.024--0.050, 0.059--0.088, 0.057--0.116 and 0.066--0.115, respectively.

The empirical powers of different tests under Simulation setup 1 are presented in Tables \ref{table 4}-\ref{table 7}. Tables \ref{table 4} and  \ref{table 5} report the powers for $p=200$, and tables \ref{table 6} and \ref{table 7} report the powers for $p=1000$ and $p=3000$ respectively. The tables show that \textit{PREPR} has more power overall than the tests BS, SD, CQ, GM, and GL when \DNA{the signals are sparse - } $r=0.5\%$, and $r=1\%$. As the signals become less sparse at $r=3\%$, the tests BS, SD, and CQ are more powerful than \textit{PREPR}, particularly when \DNA{the signals are moderately strong with} $\delta=0.6$. Though tests GM and GL are liberal, tables \ref{table 4}-\ref{table 7} display that GM and GL have the smallest power in most cases. The test CLX enjoys the maximum  power in all considered  scenarios and models under the Simulation setup 1, irrespective of whether signals are sparse or dense. Such performance of CLX is not surprising since it showed inflated type I errors. Tables \ref{table 10}-\ref{table 11} display empirical powers corresponding to Simulation setup 2. From these tables, we can observe that comparisons are similar to the cases that are reported Simulation setup 1. 

In summary, based on the numerical results we have performed that \textit{PREPR} shows better control on the type I error than the remaining tests, and is more powerful than sum-of-squares tests BS, SD, CQ, GL, and GM  against the sparse alternatives. %(\DNA{This paragraph seems out of place. Maybe merge with previous paragraph or add more text?})

\section{Data Analysis/Application}
\label{sec:realdata} 
We considered an application of \text{PREPR} test for analyzing  gene expression data in terms of gene sets. A gene-set analysis refers to a statistical analysis to identify whether some functionally pre-defined sets of genes express differently under different experimental conditions. A gene-set analysis is the beginning for biologists to understand the patterns in the differential expression. There are two major categories of a gene-set test: competitive gene-set test and self-contained gene-set test (\cite{22}). The first one tests a set of genes, say $\mathcal{G}$, of interest with complementary set of genes which are not in $\mathcal{G}$. Self-contained gene-set test considers the null hypothesis that none of the genes in $\mathcal{G}$ are differentially expressed. The proposed two sample test \textit{PREPR} is applicable to  the self-contained gene-set test. We apply \textit{PREPR} test on the following two data sets to perform the self-contained gene-set test.

\subsection{ Colon tumor and normal data}
We consider the application of \textit{PREPR} test method for analyzing the colon tumor and normal data set consisting of gene expression levels of different cell types.  The data set has 40 colon tumor samples and 22 normal colon tissue samples. The data contains the expression of $p = 2000$ genes with highest minimal intensity across the 62 samples; see (\cite{23}) for more details. This data set is available in the R-package ``\texttt{plsgenomics}". We are interested to test the null hypothesis that the tumor group has the same gene expression levels as the normal group. Figure \ref{fig1} shows that distributions of many gene expression values for both groups are highly skewed and hence we log-transformed the data prior to analysis. Before we apply \textit{PREPR} method along with other methods to this data set, it is worth to investigate whether these test procedures control type I error for this data set. Due to unavailability of
repeated data sets on colon tumor tissues and normal tissues, we can apply similar idea as \cite{25} to compare performance of different tests with respect to type I error. We randomly partition 40 samples of colon tumor data into two groups where each group has 20 samples. Since both groups have the same distribution, we can expect that both groups have the same mean vector. For each partition, we apply the tests \textit{PREPR}, BS, SD, CQ, GM,  GL, CLX and calculate corresponding p-values. We repeat this procedure 1000 times for obtaining type I error for each test corresponding to the colon tumor  data. We also randomly partition 22 samples in the normal data into two groups where each group has 11 samples, and similarly obtain type I error for each test corresponding to the normal data. Table \ref{table:12}  reports empirical type I error of different tests at $5\%$ nominal level corresponding to normal data and colon tumor data, respectively. Table \ref{table:12} shows that in normal data \textit{PREPR} achieves type I error closest to the nominal level of 0.05 whereas the other methods show difficulty controlling type I error. In tumor data, \textit{PREPR} displays better control of type I error compared to  the tests CQ, SD, GL, and GM. We apply these tests to the complete data sets: $n_1=22$ samples from normal data and $n_2=40$ from colon tumor data. 

We obtain the p-values for the tests \textit{PREPR}, BS, SD, CQ, GM, GL, and CLX under the null hypothesis, the tumor group has the same gene expression levels as the normal group, are 0.004, 0.002, 0.252, 0.092, 0.000, 0.000 and $1.87 \times 10^{-6}$ respectively. Though tests BS, GM, GL, and CLX the reject the null hypothesis that no genes  are differentially expressed, it is worth noting from Table \ref{table:12} that these tests have inflated type I error for both the colon tumor and normal data.

\begin{figure}[!htb]
\centering
\includegraphics[height=12cm,width=15cm]{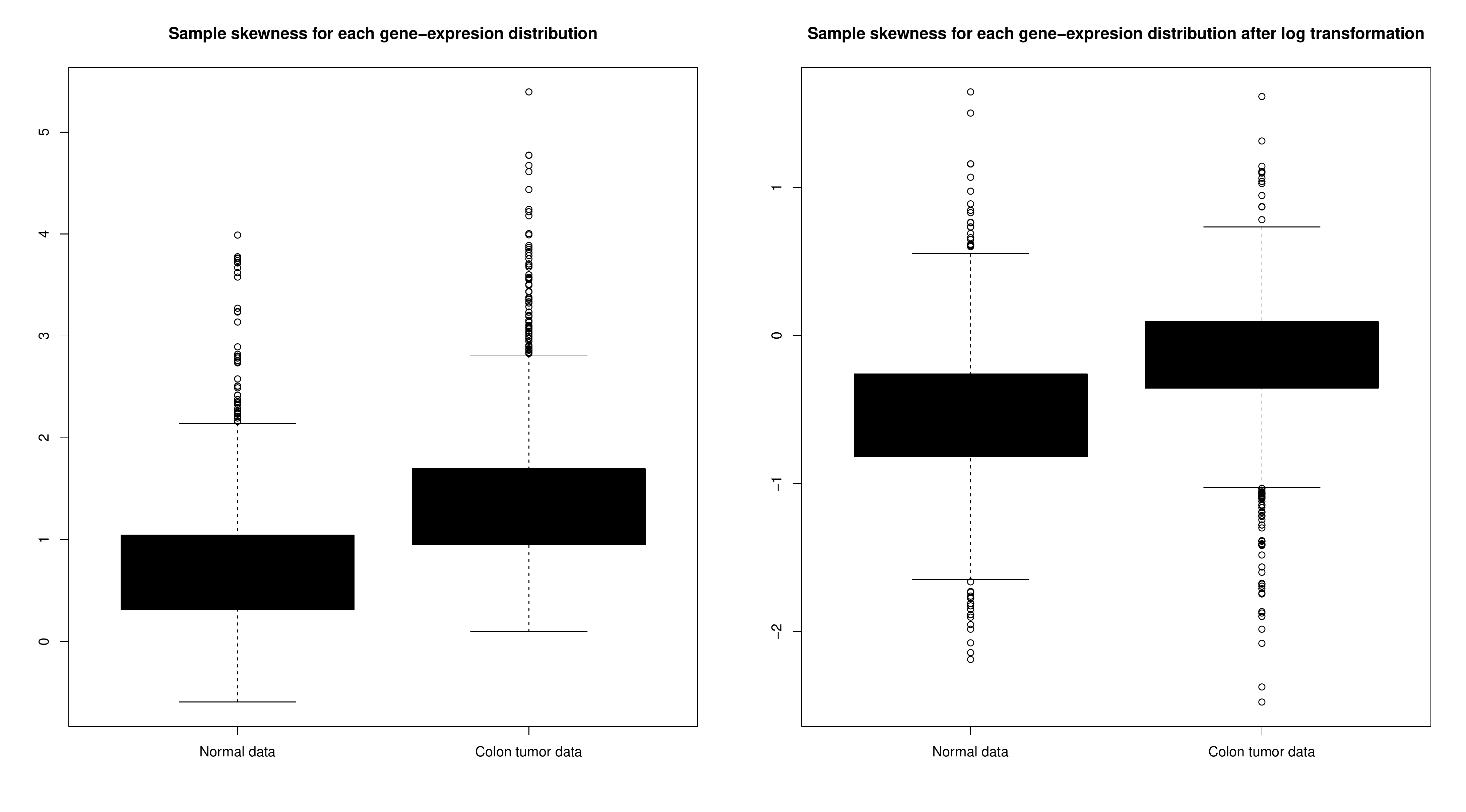}
\caption{Sample skewness of the distributions of $2000$ gene expressions from the colon tumor data set. The plots display the skewness before (left) and after (right) applying a log-transformation to the data. }
\label{fig1}
\end{figure}

\begin{table}[!htbp]
\centering
\caption{Empirical type I error of different tests at $5\%$ significance level computed from 1000 randomly partitioned data sets constructed using the normal and colon tumor data, respectively}
\begin{adjustbox}{width=0.7\textwidth}
\begin{tabular}{ccc}
\hline
\text{Test Method} &normal group& colon tumor group\\
\hline
  PREPR  &0.035&0.022\\
  BS  &0.087&0.066\\
  SD &0.009&0.007\\
  CQ  &0.010&0.011\\
  GM  &0.949&0.931\\
  GL &0.946&0.926\\
  CLX  &0.112&0.040\\[7pt]  
\hline
\end{tabular}
\end{adjustbox}
\label{table:12}
\end{table}

\subsection{Embryonal tumors  data}

Embryonal tumors are most commonly occurring brain tumors in children.  The major subtypes of Embryonal tumors of the  central nervous system in children are medulloblastoma, atypical teratoid rhabdoid tumor (ATRT), and embryonal  tumours; see \cite{24} more details. We applied the different tests to  determine whether the medulloblastoma and  ATRT  could be molecularly distinguished. Medulloblastoma  and  ATRT tumors have expression of $p=5597$ genes on 10 subjects, respectively. This data set is available at \href{http://www.stat.cmu.edu/~jiashun/Research/software/GenomicsData/Brain/}{http://www.stat.cmu.edu/~jiashun/Research/software/GenomicsData/Brain/}. Similar to the colon tumor and normal data set, we first computed the empirical type I error rates of tests \textit{PREPR}, BS, SD, CQ, GM, GL, and CLX separately by randomly partitioning the medulloblastoma tumor data and  ATRT tumor data. To compute the empirical type I errors for each data set, we randomly  partitioned each data set into  two  groups where  each  group  had 5 samples. Then we applied a similar scheme as the colon tumor and normal data to calculate the type I errors. These results, reported in Table \ref{table:13}, show that tests \textit{PREPR}, BS, and CQ give good control on type I error for both tumor data whereas GM, GL, and CLX exhibit overly inflated type I errors. 

The p-values of the tests \textit{PREPR}, BS, SD, CQ, GM, GL and CLX under the null hypothesis that  Medulloblastoma  and  ATRT tumors have the same gene expression levels, are $6.23 \times 10^{-06}$, $3.99 \times 10^{-11}$, 0.411, $3.91 \times 10^{-11}$, 0.000, 0.000 and 0.000, respectively. Though tests  GM, GL, and CLX the reject the null hypothesis that no genes  are differentially expressed, Table \ref{table:13} shows that these tests have inflated type I error, indicating that these results are not very reliable.
\begin{table}[!htbp]
\centering
\caption{Empirical type I error of different tests at $5\%$ significance level computed from 1000 randomly partitioned data sets constructed using the medulloblastoma and ATRT data sets respectively.}
\begin{adjustbox}{width=0.7\textwidth}
\begin{tabular}{ccc}
\hline
\text{Test Method} & medulloblastoma tumor & ATRT tumor\\
\hline
  PREPR &0.045&0.040\\
  BS  &0.058&0.056\\
  SD &0.010&0.006\\
  CQ  &0.058&0.056\\
  GM  &0.883&0.884\\
  GL &0.982&0.966\\
  CLX  &1.000&0.990\\[7pt]  
\hline
\end{tabular}
\end{adjustbox}
\label{table:13}
\end{table}

\section{Discussion}
\label{sec:discussion}
In this article, we developed the test \text{PREPR} for two-sample mean vectors testing problem when dimension of data  is larger than the sample size. \text{PREPR} is based on the prepivoting technique \DNA{to achieve better sampling distribution of the root}. We derived the limiting null distribution of the test statistic, which is asymptotically pivotal. We analyzed the power of the suggested \text{PREPR} test statistic and it is a consistent test provided the maximum absolute value of the standardized signals is higher in order than $\sqrt{\text{log}(p^2/\text{log}(p))}$. In addition to these theoretical investigations, the simulation studies presented in Section 3 demonstrate that the \text{PREPR} can control the type I error regardless of whether the distributions of the two samples are normal or non-normal. In terms of type I error control, \textit{PREPR} provides better control than the considered existing tests: BS, CQ, SD, CLX, GM and GL in many cases. Further, the simulation studies also demonstrate that \textit{PREPR} attains more power than BS, CQ, SD,  GM, and GL when there is a small number of signals, i.e. components of the random vectors with non-zero means difference. We also observed in the simulation studies that CLX always has more power \textit{PREPR} is due to fact CLX enjoys inflated type-I error. \DNA{Application of \textit{PREPR} to the two data sets also indicate that the proposed test controls type I error for gene expression data sets. It is also capable of detecting differential expression in gene sets under different experimental conditions.} For the simplicity and the computational efficacy, \textit{PREPR} can be successfully applied in practice. \DNA{We have developed an \texttt{R} package to implement \textit{PREPR}. The package is available at \href{https://github.com/dnayyala/prepr}{https://github.com/dnayyala/prepr}.}

\pagebreak

\appendix

\renewcommand{\thesection}{A\Roman{section}}
\numberwithin{equation}{section}

\section*{Appendix}
\label{sec:appendix}
\setcounter{equation}{0}
\textbf{ Proof of Theorem  \ref{pro1}}: The proof of Theorem \ref{pro1}  is available with the attached supplement.\newline\newline
\textbf{ Proof of Theorem  \ref{pro2}}: To sketch the proof of Theorem  \ref{pro2}, we need Lemmas \ref{le1} and \ref{le2}. Below is Lemma \ref{le1}.
\begin{lemma} \label{le1}
Let $\mu_{xi,l}=\text{E}(X_{i}-\mu_{xi})^{l}<\infty$ and $\mu_{yi,l}=\text{E}(Y_{i}-\mu_{yi})^{l}<\infty$ denote the $l$th order central moment of the $i$ th component of $\bm{X}$ and $\bm{Y}$,   $i = 1, \ldots, p$ and $l\geq 1$.  If the corresponding estimates are defined as
$\hat{\mu}_{xi,l}=n^{-1}\sum_{j=1}^{n}(X_{j,i}-\bar{X}_i)^{l}$ and $\hat{\mu}_{yi,l}=m^{-1}\sum_{j=1}^{m}(Y_{j,i}-\bar{Y}_i)^{l}$,
then 
\begin{equation}
\max\limits_{i\geq 1} (\hat{\mu}_{xi,l}-\mu_{xi,l})=O_{p}(N^{-1/2}),\;\;\text{and} \;\; \max_{i\geq 1}(\hat{\mu}_{yi,l}-\mu_{yi,l})=O_{p}(N^{-1/2}).
\end{equation}
\end{lemma}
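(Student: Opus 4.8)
The plan is to reduce the sample central moment to an average of i.i.d.\ summands plus asymptotically negligible corrections, and then to promote the resulting coordinatewise bound to a uniform one over the $p$ coordinates using the moment control in (A1). Write $Z_{j,i}=X_{j,i}-\mu_{xi}$ for the centered observations, so that $\bar{X}_i-\mu_{xi}=\bar{Z}_i=n^{-1}\sum_{j=1}^{n}Z_{j,i}$ and $X_{j,i}-\bar{X}_i=Z_{j,i}-\bar{Z}_i$. Expanding the binomial gives
\[
\hat{\mu}_{xi,l}=n^{-1}\sum_{j=1}^{n}(Z_{j,i}-\bar{Z}_i)^{l}=\sum_{k=0}^{l}\binom{l}{k}(-\bar{Z}_i)^{k}\,m_{l-k,i},\qquad m_{s,i}=n^{-1}\sum_{j=1}^{n}Z_{j,i}^{s}.
\]
The $k=0$ term $m_{l,i}$ is the \emph{oracle} estimator that plugs in the true mean, while the terms with $k\geq1$ are corrections, each carrying at least one factor of $\bar{Z}_i$.

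First I would handle the oracle term. Since $m_{l,i}-\mu_{xi,l}=n^{-1}\sum_{j=1}^{n}\{Z_{j,i}^{l}-\mu_{xi,l}\}$ is an average of i.i.d.\ mean-zero summands, its variance is $n^{-1}\mathrm{Var}(Z_i^{l})\leq n^{-1}\mathbb{E}(Z_i^{2l})$. For the orders actually needed in the Edgeworth expansion, namely $l\leq4$ (the functions $q_i$ involve moments only up to the kurtosis), assumption (A1) bounds $\mathbb{E}(Z_i^{2l})\leq\mathbb{E}(Z_i^{8})$ \emph{uniformly} in $i$, so each oracle term is $O_p(N^{-1/2})$ with a constant free of $i$; (A3) lets me replace $n$ by $N$. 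Next I would dispatch the correction terms: $\bar{Z}_i=O_p(N^{-1/2})$ (again by the uniform second moment from (A1)), while every lower sample moment $m_{s,i}$, $s\leq l-1$, has uniformly bounded mean $\mu_{xi,s}$ by (A1) and fluctuation $O_p(N^{-1/2})$, hence is $O_p(1)$ uniformly; consequently each $k\geq1$ term is $O_p(N^{-1/2})$ or smaller. Combining the two pieces yields the estimate coordinatewise, and the $Y$-sample is treated identically after interchanging the roles of $n$ and $m$, which is legitimate under (A3).

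The hard part will be upgrading this coordinatewise control to a bound on the \emph{maximum} over the diverging number of coordinates. Here I would not argue termwise but instead apply a moment (Markov-type) inequality to the oracle average, using the uniform eighth moment in (A1) together with a Rosenthal / von Bahr--Esseen bound to get $\mathbb{E}|m_{l,i}-\mu_{xi,l}|^{2r}=O(N^{-r})$ with a constant uniform in $i$, and then a union bound over the $p$ coordinates, with the same device applied to $\max_i|\bar{Z}_i|$ and $\max_i|m_{s,i}|$. The delicate point is that only a finite number of moments is available (for the top order $l=4$ the summand $Z_i^{4}-\mu_{xi,4}$ has merely a uniformly bounded second moment), so a union bound necessarily forces a penalty that grows slowly with $p$. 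Showing that this penalty is absorbed into the stated $O_p(N^{-1/2})$ rate under the paper's dimension regime (where $p$ is tied to $N$ through $\log p=o(N^2)$) is the crux of the argument, and it is precisely here that the uniform eighth-moment assumption (A1) is indispensable.
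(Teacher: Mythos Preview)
Your route is genuinely different from the paper's. You expand the sample central moment via the binomial theorem, isolate an ``oracle'' i.i.d.\ average $m_{l,i}$, control the lower-order correction terms through $\bar Z_i$, and then try to upgrade to a maximum via Rosenthal--type moment bounds and a union bound. The paper does none of this. It starts from the coordinatewise fact $\hat{\mu}_{xi,l}-\mu_{xi,l}=O_p(n^{-1/2})$ (quoted from Serfling) and then runs a purely formal limit-interchange argument: setting $p_{i,n}=P\{|\hat{\mu}_{xi,l}-\mu_{xi,l}|>cn^{-1/2}\}$, it first shows $\sup_{u\geq n}\sum_{i} p_{i,u}\leq\sum_{i}\sup_{v\geq n}p_{i,v}$, bounds $P\{\max_i|\cdot|>cn^{-1/2}\}$ by the union sum, and then pushes $\lim_{c\to\infty}\limsup_n$ inside the countable sum, where each summand is zero by the coordinatewise $O_p$ statement.

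Your approach is more transparent about where (A1) enters, but the difficulty you flag at the end is real and is not resolvable along your lines: with only a uniform eighth moment, the best Markov/Rosenthal tail you can extract leaves a polynomial-in-$p$ factor after the union bound, and the regime $\log p=o(N^2)$ is far too weak to absorb that into $O_p(N^{-1/2})$. The paper sidesteps this entirely by never asking for a quantitative tail bound --- it only needs $\lim_{c\to\infty}\limsup_n p_{i,n}=0$ for each $i$ and then sums zeros. So the trade-off is: your argument is quantitatively honest but cannot close; the paper's argument closes by a soft interchange of $\lim_c\limsup_n$ with a countable sum. If you want to match the paper, drop the Rosenthal machinery and argue directly at the level of the definition of $O_p$ together with that interchange.
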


\begin{proof}: Define
\[p_{i,n}=P\{|\mu_{xi,l}-\mu_{xi,l}|>cn^{-1/2}\},\;\; \text{and}\;\; \delta_{i,n}=1-p_{i,n}.\]
Then for every $i$ and $u\geq n$
\[ \inf_{v\geq n}\delta_{i,v}\leq \delta_{i,u}.\]
Consequently, for  every $r$ and $u\geq n$
\[ \sum_{i=1}^{r}\inf_{v\geq n}\delta_{i,v}\leq \sum_{i=1}^{r}\delta_{i,u},\]
which indicates that for every $r$ and $n$
\[ \sum_{i=1}^{r}\inf_{v\geq n}\delta_{i,v}\leq \inf_{u\geq n} \sum_{i=1}^{r}\delta_{i,u}.\]
The above inequality implies that
\[\sum_{i=1}^{r}\inf_{v\geq n}(1-p_{i,v})\leq \inf_{u\geq n} \sum_{i=1}^{r}(1-p_{i,u}) \implies
\sum_{i=1}^{r}\sup_{v\geq n}p_{i,v}\geq \sup_{u\geq n} \sum_{i=1}^{r}p_{i,u}\]
Thus for every $r$, \[\sup_{u\geq n} \sum_{i=1}^{r}p_{i,u}\leq \sum_{i=1}^{\infty}\sup_{v\geq n}p_{i,v}\implies  \lim_{n\to\infty}\sup_{u\geq n} \sum_{i=1}^{r}p_{i,u}\leq \sum_{i=1}^{\infty}\lim_{n\to\infty }\sup_{v\geq n}p_{i,v},\]
consequently \[\lim_{n\to\infty}\sup_{u\geq n} \sum_{i=1}^{\infty}p_{i,u}\leq \sum_{i=1}^{\infty}\lim_{n\to\infty }\sup_{v\geq n}p_{i,v}.\] 
Since the last inequality holds for every $c$, we have
\bea \label{a9}
\lim_{c\to\infty} \limsup_{n\to\infty}\sum_{i=1}^{\infty}P\{|\hat{\mu}_{xi,l}-\mu_{xi,l}|>cn^{-1/2}\}&\leq& \sum_{i=1}^{\infty}\lim_{c\to\infty} \limsup_{n\to\infty }P\{|\hat{\mu}_{xi,l}-\mu_{xi,l}|>cn^{-1/2}\}\nonumber\\
\eea
Now,
\bea \label{a10}  
\lim_{c\to\infty} \limsup_{n\to\infty} P\{\max_{i\geq 1 }|\hat{\mu}_{xi,l}-\mu_{xi,l}|>cn^{-1/2}\} &\leq& \lim_{c\to\infty} \limsup_{n\to\infty} \sum_{i=1}^{\infty} P\{|\hat{\mu}_{xi,l}-\mu_{xi,l}|>cn^{-1/2}\}\nonumber\\
&\leq&\sum_{i=1}^{\infty}\lim_{c\to\infty} \limsup_{n\to\infty }P\{|\hat{\mu}_{xi,l}-\mu_{xi,l}|>cn^{-1/2}\}\nonumber,\\
\eea
the last line follows form (\ref{a9}). Since $\hat{\mu}_{xi,l}-\mu_{xi,l}=O_{p}(n^{-1/2})$ (see pp. 72, \cite{25}), so \[\lim_{c\to\infty} \limsup_{n\to\infty }P\{|\hat{\mu}_{xi,l}-\mu_{xi,l}|>cn^{-1/2}\} \to 0,\] and (\ref{a10}) implies that
\[\lim_{c\to\infty} \limsup_{n\to\infty} P\{\max_{1\leq i<\infty }|\hat{\mu}_{xi,l}-\mu_{xi,l}|>cn^{-1/2}\}\to 0.\]

Thus, we have  
\[ \max_{1\leq i <\infty}|\hat{\mu}_{xi,l}-\mu_{xi,l}|=O_{p}(n^{-1/2}) = O_p(N^{-1/2}),\] because $n/N=O(1)$.
% \[ \max_{1\leq i <\infty}|\hat{\mu}_{xi,l}-\mu_{xi,l}|=O_{p}(n^{-1/2}),\] and since $n/N=O(1)$, so 
% \[ \max_{1\leq i <\infty} (\hat{\mu}_{xi,l}-\mu_{xi,l})=O_{p}(N^{-1/2}).\] 
Similarly, it can be shown that \[ \max_{1\leq i <\infty}(\hat{\mu}_{yi,l}-\mu_{yi,l})=O_{p}(N^{-1/2}).\] 
\end{proof}

\begin{lemma}\label{le2}
Under assumption (A1)-(A3),
\[\max_{1\leq i\leq p}\Phi^{-1}[\hat{J}_{i}(R^{0}_{i})]=\max_{1\leq i\leq p} Z_{i}+o_{p}(N^{-1}),\]
holds uniformly in $i$, where $Z_i \sim N(0,1)$ for each $i=1,\ldots, p.$
\end{lemma}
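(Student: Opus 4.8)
The plan is to prove the stronger uniform statement
\[
\max_{1\leq i\leq p}\bigl|\Phi^{-1}\{\hat J_i(R^0_i)\}-Z_i\bigr|=o_p(N^{-1}),\qquad Z_i:=\Phi^{-1}\{J_i(R^0_i)\},
\]
from which the lemma follows via $|\max_i a_i-\max_i b_i|\le\max_i|a_i-b_i|$. First I note that each $Z_i$ is exactly standard normal: since $J_i$ is the continuous cdf of $R^0_i$, the probability integral transform gives $J_i(R^0_i)\sim U(0,1)$ under $H_0$, so $Z_i\sim N(0,1)$. Thus the entire content is to show that replacing the true quantile transform $J_i$ by its bootstrap estimate $\hat J_i$ perturbs $\Phi^{-1}\{\,\cdot\,\}$ by only $o_p(N^{-1})$, uniformly over the $p$ coordinates.

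Next I would reduce $\hat J_i$ to its analytic second-order form. By \eqref{eq2}, $\hat J_i(x)=\tilde J_i(x)+o_p(N^{-1})$ with $\tilde J_i(x)=2\Phi(x)-1+2N^{-1}\hat q_i(x)\phi(x)$; subtracting the expansion of $J_i$ from Theorem \ref{pro1} gives
\[
\hat J_i(R^0_i)-J_i(R^0_i)=2N^{-1}\{\hat q_i(R^0_i)-q_i(R^0_i)\}\,\phi(R^0_i)+o_p(N^{-1}).
\]
Here $q_i$ is a fixed smooth (rational) function of the marginal moments $\sigma^2_{x,i},\sigma^2_{y,i},\gamma_{x,i},\gamma_{y,i},\kappa_{x,i},\kappa_{y,i}$ whose $x$-coefficients are polynomials of degree $\le 5$, and $\hat q_i$ is the same function of the corresponding sample moments. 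Lemma \ref{le1} shows every such sample central moment converges to its population value uniformly in $i$ at rate $O_p(N^{-1/2})$; using (A1) to bound the moments and the marginal variance non-degeneracy to keep $\eta_{1,i}$ bounded away from $0$, a first-order Taylor expansion of the smooth map yields $\hat q_i(x)-q_i(x)=O_p(N^{-1/2})\,P(x)$ uniformly in $i$, with $P$ a fixed polynomial of degree $\le 5$.

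Then I would pass through $\Phi^{-1}$ by the mean value theorem, writing $\Phi^{-1}\{\hat J_i(R^0_i)\}-Z_i=\{\hat J_i(R^0_i)-J_i(R^0_i)\}/\phi(\Phi^{-1}(\xi_i))$ for an intermediate value $\xi_i$ between $J_i(R^0_i)$ and $\hat J_i(R^0_i)$. The delicate point, and the main obstacle, is that the coordinates attaining the maximum have $R^0_i$ in the tail of order $\sqrt{2\log p}$, where $\xi_i\to 1$ and the derivative factor $1/\phi(\Phi^{-1}(\xi_i))$ blows up like $p$. The whole argument hinges on the cancellation of this blow-up against the Gaussian factor $\phi(R^0_i)$ in the numerator: since the leading part of $\xi_i$ is $2\Phi(R^0_i)-1$, the quantile $\Phi^{-1}(\xi_i)$ differs from $R^0_i$ only by an $O(1/R^0_i)$ term plus the $O_p(N^{-1})$ Edgeworth shift, so the ratio $\phi(R^0_i)/\phi(\Phi^{-1}(\xi_i))$ stays bounded (indeed $\asymp 1$) uniformly in the tail. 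Making this ratio bound rigorous — controlling $\phi\bigl(\Phi^{-1}(2\Phi(x)-1+2N^{-1}\hat q_i(x)\phi(x))\bigr)$ against $\phi(x)$ uniformly for $x$ up to order $\sqrt{2\log p}$ — is exactly the content underlying expression \eqref{a12} and is where the real work lies.

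Finally I would assemble the bound. With the ratio controlled, the increment is $O_p(N^{-3/2})\,P(R^0_i)$ uniformly in $i$; taking the maximum and using a crude tail bound from the bounded moments in (A1) for $\max_i R^0_i=O_p(\sqrt{\log p})$ gives $\max_i|\Phi^{-1}\{\hat J_i(R^0_i)\}-Z_i|=O_p\bigl(N^{-3/2}\,\mathrm{poly}(\log p)\bigr)$, which is $o_p(N^{-1})$ under the growth restriction $\log p=o(N^2)$ inherited from Theorem \ref{pro2} (tracking the exact power of $\log p$ is routine). This reduces the maximum of the prepivoted statistic to the maximum of the standard normals $Z_i$, as claimed.
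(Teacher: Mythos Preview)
Your approach is essentially the same as the paper's: use Lemma~\ref{le1} to get $\hat q_i(x)=q_i(x)+O_p(N^{-1/2})$ uniformly in $i$, deduce $\tilde J_i(R^0_i)=J_i(R^0_i)+o_p(N^{-1})$ from the Edgeworth expansion of Theorem~\ref{pro1}, and then push this through $\Phi^{-1}$ by a first-order Taylor/MVT step to reach $\Phi^{-1}\{\tilde J_i(R^0_i)\}=Z_i+o_p(N^{-1})$ (this is exactly the paper's chain (\ref{a11})--(\ref{a12})). The paper then concludes by noting the error order is common to all $i$ and passing to the maximum, which is your $|\max_i a_i-\max_i b_i|\le\max_i|a_i-b_i|$ step.

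Where you go beyond the paper is in the treatment of the $\Phi^{-1}$ step. The paper simply writes $\Phi^{-1}\bigl(J_i(R^0_i)+o_p(N^{-1})\bigr)=\Phi^{-1}\bigl(J_i(R^0_i)\bigr)+o_p(N^{-1})$ without addressing the fact that $(\Phi^{-1})'(u)=1/\phi(\Phi^{-1}(u))$ is unbounded near $u=1$, precisely where the maximizing coordinates live. You correctly isolate this as the real obstacle and supply the missing mechanism: the factor $\phi(R^0_i)$ in the numerator cancels the blow-up because $\Phi^{-1}(\xi_i)$ differs from $R^0_i$ only by $O(1/R^0_i)$ in the tail, so the ratio $\phi(R^0_i)/\phi(\Phi^{-1}(\xi_i))$ remains bounded. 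This cancellation, together with the polynomial growth of $q_i$, yields a uniform bound $O_p\bigl(N^{-3/2}\,\mathrm{poly}(\log p)\bigr)$ and explains why the growth condition $\log p=o(N^2)$ from Theorem~\ref{pro2} is genuinely needed to conclude $o_p(N^{-1})$---a dependence the paper's statement of the lemma under (A1)--(A3) alone does not make explicit. In short, same route, but your version plugs a gap the paper's Taylor step leaves open.
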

\begin{proof}
Lemma \ref{le1} indicates that 
$\hat{\sigma}^2_{xi}=\sigma^2_{xi}+O_{p}(N^{-1/2})$,
$\hat{\sigma}^2_{yi}=\sigma^2_{yi}+O_{p}(N^{-1/2})$,
$\hat{\gamma}_{yi}=\gamma_{yi}+O_{p}(N^{-1/2})$,
$\hat{\gamma}_{xi}=\gamma_{xi}+O_{p}(N^{-1/2})$, $\hat{\gamma}_{yi}=\gamma_{yi}+O_{p}(N^{-1/2})$, $\hat{\kappa}_{xi}=\gamma_{xi}+O_{p}(N^{-1/2})$,
 $\hat{\kappa}_{yi}=\kappa_{yi}+O_{p}(N^{-1/2})$ hold  uniformly in $i$, and these facts imply that
 \[ \hat{q}_i(x)=q_{i}(x)+O_{p}(N^{-1/2}),\]
hold uniformly in $i$ since $\hat{q}_i(x)$ are function of the above sample quantities. Thus from Theorem \ref{pro1} and (\ref{eq3}) we have 
 \bea \label{a11}
 \tilde{J}_{i}(R^{0}_i)&=&2\Phi(R^{i}_{0})-1+2N^{-1}[q_{i}(R^{0}_i)+O_{p}(N^{-1/2})]\nonumber\\
 &=&2\Phi(R^{i}_{0})-1+2N^{-1}q_{i}(R^{0}_i)+O_{p}(N^{-3/2})\nonumber\\
 &=&J_{i}(R^{i}_{0})+o_{p}(N^{-1}),
 \eea
holds uniformly in $i$.

Applying a Taylor series expansion on $\Phi^{-1}(x)$ and using \eqref{a11}, we can derive
\bea \label{a12}
 \Phi^{-1}(\tilde{J}_{i}(R^{0}_i))&=&\Phi^{-1}(J_{i}(R^{i}_{0})))+o_{p}(N^{-1}))\nonumber\\
 &=& \Phi^{-1}((J_{i}(R^{i}_{0})))+o_{p}(N^{-1})\nonumber\\
 &=&Z_{i}+o_{p}(N^{-1})
 \eea
 where $Z_i=\Phi^{-1}((J_{i}(R^{i}_{0})))$.  \ref{a12} indicates that the order of sizes of the differences between $ \Phi^{-1}(\tilde{J}_{i}(R^{0}_i))$ and $Z_i$ same across all $i$. hence, it is safe to conclude 
\ben
\max_{1\leq i\leq p}\Phi^{-1}[\hat{J}_{i}(R^{0}_{i})]=\max_{1\leq i\leq p} [Z_i+o_{p}(N^{-1})]=\max_{1\leq i\leq p} Z_i+o_{p}(N^{-1}).
\een

\end{proof}

\begin{lemma}\label{le3}
If $\text{log}(p)=o(N^2)$ then
\begin{equation*}
N^{-1}\max_{1\leq i\leq p}Z_i\to 0\;\; \text{in probability as}\;\; N\to\infty.
\end{equation*}
\end{lemma}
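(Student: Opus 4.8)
The plan is to exploit the fact that each $Z_i$ is marginally standard normal, so that a union bound combined with the Gaussian tail inequality controls the upper tail of $\max_{1\le i\le p} Z_i$ \emph{uniformly over the (unknown) joint dependence structure}, while the lower tail is disposed of trivially by comparison with a single coordinate. Recall from Lemma \ref{le2} that the $Z_i = \Phi^{-1}(J_i(R^0_i))$ are jointly normal with correlation matrix $\Gamma$, and each is marginally $N(0,1)$.

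First I would invoke the Chernoff estimate $P(Z_i > t) \le e^{-t^2/2}$, valid for every $t\ge 0$ and every standard normal $Z_i$. Fixing $\epsilon > 0$, a union bound then gives
\[
P\Big(N^{-1}\max_{1\le i\le p} Z_i > \epsilon\Big) = P\Big(\max_{1\le i\le p} Z_i > \epsilon N\Big) \le \sum_{i=1}^{p} P(Z_i > \epsilon N) \le p\, e^{-\epsilon^2 N^2/2} = \exp\!\big\{\log p - \tfrac{1}{2}\epsilon^2 N^2\big\}.
\]
The crucial observation is that this union bound requires no independence among the $Z_i$, so it applies verbatim to the correlated Gaussian vector governed by $\Gamma$. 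Since the hypothesis $\log(p) = o(N^2)$ forces $\log p - \tfrac{1}{2}\epsilon^2 N^2 \to -\infty$, the right-hand side tends to $0$.

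Next I would handle the lower tail. Because $\max_{1\le i\le p} Z_i \ge Z_1$, we have the inclusion $\{N^{-1}\max_i Z_i < -\epsilon\} \subseteq \{N^{-1} Z_1 < -\epsilon\}$, whence
\[
P\Big(N^{-1}\max_{1\le i\le p} Z_i < -\epsilon\Big) \le P\big(Z_1 < -\epsilon N\big) \le e^{-\epsilon^2 N^2/2} \longrightarrow 0.
\]
Combining the two displays yields $P\big(|N^{-1}\max_i Z_i| > \epsilon\big) \to 0$ for every $\epsilon > 0$, which is precisely convergence in probability to $0$.

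The argument is essentially a one-line union bound, so there is no serious obstacle; the only point demanding care is to stress that the dependence among the $Z_i$ is irrelevant for the upper tail, so that the conclusion holds for the \emph{correlated} Gaussians produced in Lemma \ref{le2} and not merely in the independent case. An equivalent route would pass through the classical bound $\mathbb{E}[\max_i Z_i] \le \sqrt{2\log p}$ together with Gaussian concentration, but the direct tail computation above is cleaner and avoids invoking any concentration machinery.
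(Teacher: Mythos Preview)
Your proof is correct. The paper takes a slightly different route: it applies Markov's inequality,
\[
P\Bigl\{N^{-1}\max_{1\le i\le p} Z_i > \epsilon\Bigr\} \le \frac{\mathbb{E}\bigl(\max_{1\le i\le p}|Z_i|\bigr)}{\epsilon N},
\]
and then bounds the expectation via the standard moment-generating-function trick, obtaining $\mathbb{E}[\max_i |Z_i|] \le \sqrt{2\log(2p)}$, so the right-hand side is $\sqrt{2\log(2p)}/(\epsilon N)\to 0$. In other words, the paper uses precisely the expectation bound you mention at the end as an ``equivalent route,'' paired with Markov rather than Gaussian concentration. Your direct union-bound plus Chernoff-tail argument is more elementary and yields a sharper (exponentially decaying) probability bound, while the paper's argument packages both tails at once through $|Z_i|$ without needing your separate lower-tail step; either approach suffices under $\log p = o(N^2)$.
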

\begin{proof}
By Chebyshev's inequality
\[P\{N^{-1}\max_{1\leq i\leq p}Z_i > \epsilon\}\leq \frac{\text{E}(\max\limits_{1\leq i \leq p}|Z_i|)}{N\epsilon},\]
for any $\epsilon>0.$ Now,
\ben
\text{E}(\max_{1\leq \leq p}Z_i)&=& t^{-1}\text{E}\biggl\{\text{log}e^{t\max\limits_{1\leq i\leq p}|Z_i|}\biggr\}\\
&\leq& t^{-1} \text{E}\biggl\{\text{log}\sum_{i=1}^{p}e^{t|Z_i|}\biggr\}\\
&\leq& t^{-1}\biggl\{\text{log}\biggl[\sum_{i=1}^{p}\text{E}(e^{tZ_i})+\sum_{i=1}^{p}\text{E}(e^{-tZ_i})\biggr]\biggr\},
\een 
the last two expressions can be obtained by applying $e^{|x|}\leq e^{x}+e^{-x}$ and Jensen's inequality respectively.
For any $t \in \mathbb{R}$, E$(e^{tZ_i}) = e^{t^2/2}$ since $Z_{i}\sim N(0,1)$. Hence, we have
\bea \label{a13}
\text{E}(\max_{1\leq \leq p}Z_i)&=&t^{-1}\biggl\{\text{log}\biggl[\sum_{i=1}^{p}\text{E}(e^{tZ_i})+\sum_{i=1}^{p}\text{E}(e^{-tZ_i})\biggr]\biggr\}\nonumber\\
&=&t^{-1}\text{log}(2pe^{t^2/2})\nonumber\\
&=&\frac{\text{log} (2p)}{t}+\frac{t}{2}.
\eea
Since $\frac{\text{log} (2p)}{t}+\frac{t}{2}$ attains maximum value at $t=\sqrt{2\text{log} (2p)}$, the upper bound in \eqref{a13} will be
\[\text{E}(\max_{1\leq \leq p}Z_i)\leq \frac{\text{log} (2p)}{\sqrt{2\text{log} (2p)}}+\frac{\sqrt{2\text{log} (2p)}}{2}=\sqrt{2\text{log} (2p)}.\] 
We now have
\[P\{N^{-1}\max_{1\leq i\leq p}Z_i>\epsilon\}\leq \frac{\sqrt{2\text{log} (2p)}}{\epsilon N}.\]

The above inequality indicates Lemma \ref{le3} because  $\frac{\sqrt{2\text{log} (2p)}}{\epsilon N}\to 0$ under the assumption $\text{log}(2p)=o(N^2)$. 
\end{proof}

\textbf{Proof of Theorem \ref{pro2}}:
We can express $T^{\prime\prime}$ as follows
\bea \label{a14}
T^{\prime\prime}&=&\{ \max_{1\leq i\leq p} Z_{i}+o_{p}(N^{-1})\}^{2}-2\text{ln}(p)+\text{ln}(\text{ln}(p))\nonumber\\
&=& \left (\max_{1\leq i\leq p}Z_i \right)^2+2o_{p}(N^{-1})\max_{1\leq i\leq p}Z_{i}+o_{p}(N^{-2})-2\text{ln}(p)+\text{ln}(\text{ln}(p)).
\eea
Using Lemma \ref{le3}, we we can write the expression (\ref{a14}) as
\[T^{\prime\prime}= (\max_{1\leq i\leq p}Z_{i})^{2}-2\text{ln}(p)+\text{ln}(\text{ln}(p))+o_{p}(1).\]
The result in pp. 309 of \cite{20} indicates $(Z_1,\ldots, Z_p)^{\top}\sim N_{p}(\bm{0},\Gamma)$, where \[\gamma_{ij}=\text{corr}(\Phi^{-1}\{J_{i}(R^{0}_i)\},\Phi^{-1}\{J_{j}(R^{0}_j)\] is the $(i,j)$the element of $\Gamma$.
Now, the assumptions (A4), (A5) and \textit{Lemma} 6 of Cai et al. \cite{4} conclude that \[P\biggl\{(\max_{1\leq i\leq p}Z_{i})^{2}-2\text{ln}(p)+\text{ln}(\text{ln}(p))\leq x)\biggr\}\to \exp\{-\frac{1}{2\sqrt{\pi}}\exp(-x/2)\},\]
as $p\to\infty.$ Thus $T^{\prime\prime}$ converges in distribution to random variable whose cdf is \[\exp\{-\frac{1}{2\sqrt{\pi}}\exp(-x/2)\},\] as $N,p\to\infty.$\qed

\textbf{Proof of Theorem \ref{pro3}}: For each $i\in \mathcal{A}=\{i\in\{1,\ldots, p\}: \mu_{xi}-\mu_{yi}\neq 0\}$, we can write $R^{0}_{i}$ as
\bea \label{a15}
R^{0}_{i}&=& \left|\frac{\bar{X}_i-\bar{Y}_i-\mu_{xi}-\mu_{yi}}{\sqrt{\hat{\sigma}^2_{xi}/n+\hat{\sigma}^2_{yi}/m}}+\frac{\mu_{xi}-\mu_{yi}}{\sqrt{\hat{\sigma}^2_{xi}/n+\hat{\sigma}^2_{yi}/m}} \right|\nonumber\\
&=& \left|S_{i}+\frac{\mu_{xi}-\mu_{yi}}{\sqrt{\sigma^2_{xi}/n+\sigma^2_{yi}/m}}+O_{p}(N^{-1/2}) \right|,\nonumber\\
\eea
where $S_{i}=\frac{\bar{X}_i-\bar{Y}_i-\mu_{xi}-\mu_{yi}}{\sqrt{\hat{\sigma}^2_{xi}/n+\hat{\sigma}^2_{yi}/m}}$ and (\ref{a15}) is obtained using the facts that $\hat{\sigma}^{2}_{xi}=\sigma^{2}_{xi}+O_{p}(N^{-1/2}),$ $\hat{\sigma}^{2}_{yi}=\sigma^{2}_{yi}+O_{p}(N^{-1/2}),$ and under assumption (A3), which implies that $\lambda_1=n/N=O(1)$, and $\lambda_2=m/N=O(1)$. For notational simplicity, denote $\frac{\mu_{xi}-\mu_{yi}}{\sqrt{\sigma^2_{xi}/n+\sigma^2_{yi}/m}}=\psi_{i}$. Using (\ref{a15}), we shall now establish the expression for P$\{R^{0}_i\leq x\}$ for any $x>0$. For each $i\in \mathcal{A}$, $\tilde{J}(R^{0}_{i})$ can be expressed as

\[
\tilde{J}_{i}(R^{0}_{i})=2\Phi(|S_{i}+\psi_{i}|)-1+o_{p}(1),  
\]
 since $S_i=O_{p}(1)$ , and $\hat{q}_{i}(|S_{i}+\psi_{i}|)\phi(|S_{i}+\psi_{i}|)=O_{p}(1)$, so $N^{-1}\hat{q}_{i}(|S_{i}+\psi_{i}|)\phi(|S_{i}+\psi_{i}|) \to 0$  as $N\to\infty$. If $i \notin \mathcal{A}$, then $\psi_{i}=0$ and $\tilde{J}_{i}(R^{0}_{i})$ can be expressed as
\[
\tilde{J}_{i}(R^{0}_{i})=2\Phi(|S_{i}+\psi_{i}|)-1+o_{p}(1)  
\]

as $N\to\infty.$ Now, $\max\limits_{1\leq i\leq p}\biggl\{2\Phi(|S_{i}+\psi_{i}|)-1+o_{p}(1)\biggr\}=2\Phi(\max\limits_{1\leq i\leq p}|S_{i}+\psi_{i}|)-1+o_{p}(1)$. Hence from (\ref{eq4}), we have   
\[2\Phi(\max_{1\leq i\leq p}|S_{i}+\psi_{i}|)-1+o_{p}(1)\approx 2\Phi(\max_{1\leq i\leq p}|\psi_{i}|)-1\] as $N\to\infty$, since $S_i=O_{p}(1)$. Using the Mills ratio , the alternative hypothesis $T^{\prime\prime}=[\max_{1\leq i\leq p}\Phi^{-1}\{\tilde{J}_{i}(R^{0}_i)\}]^2-2\text{log}(p)$+\text{log}(\text{log}$(p))$ can be approximated under  as follows 
\begin{multline}\label{ap16}
T^{\prime\prime}\approx\{\Phi^{-1}[2\Phi(\max_{1\leq i\leq p}|\psi_{i}|)-1]\}^2-\text{log}(p^2/\text{log}(p))\\
\approx \{\Phi^{-1}(2\biggl[1-\frac{\phi(\max_{1\leq i\leq p}|\psi_{i}|)}{\max_{1\leq i\leq p}|\psi_{i}|}\biggr]-1)\}^2-\text{log}(p^2/\text{log}(p))\\ 
=\{\Phi^{-1}\biggl(1-2\frac{\phi(\max_{1\leq i\leq p}|\psi_{i}|)}{\max_{1\leq i\leq p}|\psi_{i}|}\biggr)\}^2-\text{log}(p^2/\text{log}(p))\\
\approx 2\text{log}\biggl(\frac{\max_{1\leq i\leq p}|\psi_{i}|}{2\phi(\max_{1\leq i\leq p}|\psi_{i}|)}\biggr)-\text{log}(p^2/\text{log}(p)),
\end{multline}
last line is obtained using the fact that $\Phi^{-1}(1-1/t)=\sqrt{2\text{log}(t)}$, pp. 109 of \cite{26}; $\phi(x)$ the pdf of a standard normal  distribution. After simplifying (\ref{ap16}), we have the following expression
\bea \label{ap17}
T^{\prime\prime} \approx 2\text{log}\biggl(\max_{1\leq i\leq p}|\psi_{i}|\biggr)+\biggl(\max_{1\leq i\leq p}|\psi_{i}|\biggr)^2-\text{log}(p^2/\text{log}(p))-\text{log}(2/\pi),
\eea
as $N\to\infty.$

Again, (\ref{eq4}) implies that $\biggl(\max_{1\leq i\leq p}|\psi_{i}|\biggr)^2$ diverges faster than $\text{log}(p^2/\text{log}(p))$ as $N, p\to\infty$ so (\ref{ap17}) indicates that P$\{T^{\prime\prime}>q_{1-\alpha}\} \to 1$ as $N,p\to\infty.$
Hence we have Theorem \ref{pro3}. \qed

\newpage
\bibliographystyle{natbib}

\end{document}